\newcommand{\STATE}{S}
\renewcommand{\state}{s}
\newcommand{\stochastic}{\{\STATE_{\round}\}_{\round \geq 0}}
\newcommand{\round}{t}
\newcommand{\Colour}{X}
\newcommand{\textdef}[1]{\textbf{#1}}
\newcommand{\citenp}[1]{\citeauthor{#1} \citeyear{#1}}
\newcommand{\citenptext}[1]{\citeauthor{#1} (\citeyear{#1})}
\newtheorem*{remark}{Remark}
\newtheorem{corollary}{Corollary}
\newtheorem{lemma}{Lemma}
\theoremstyle{thmstyleone}%
\newtheorem{theorem}{Theorem}
\newtheorem{proposition}[theorem]{Proposition}%
\theoremstyle{thmstyletwo}%
\newtheorem{example}{Example}%
\theoremstyle{thmstylethree}%
\newtheorem{definition}{Definition}%
\begin{document}

\title[Voter Model Meets Rumour Spreading: an FPRAS for Consensus Probabilities on Voter Models with Agnostic Nodes]{Voter Model Meets Rumour Spreading: an FPRAS for Consensus Probabilities on Voter Models with Agnostic Nodes}


\author*[1]{\fnm{Marcelo} \sur{Matheus Gauy}}\email{marcelo.gauy@unesp.br}

\author[2]{\fnm{Anna} \sur{Abramishvili}}\email{anna.abramishvili@kcl.ac.uk}

\author[3]{\fnm{Eduardo} \sur{Colli}}\email{colli@ime.usp.br}

\author[4]{\fnm{Nicolaus} \sur{Heuer}}\email{nicolaus.heuer.maths@gmail.com}

\author[5]{\fnm{Tiago} \sur{Madeira}}\email{tmadeira@alumni.usp.br}

\author[2]{\fnm{Frederik} \sur{Mallmann-Trenn}}\email{frederik.mallmann-trenn@kcl.ac.uk}

\author[6]{\fnm{Vinícius} \sur{Franco Vasconcelos}}\email{vfvmat@ime.usp.br}

\author*[2]{\fnm{David} \sur{Kohan Marzagão}}\email{david.kohan@kcl.ac.uk}

\affil*[1]{\orgdiv{Departamento de Ciências e Tecnologia}, \orgname{São Paulo State University}, \orgaddress{\street{Geraldo Alckmin, 519}, \city{Itapeva}, \postcode{18409-010}, \country{São Paulo, Brazil}}}

\affil[2]{\orgdiv{Informatics}, \orgname{King's College London}, \orgaddress{\street{30 Aldwych}, \city{London}, \postcode{WC2B 4BG},  \country{United Kingdom}}}

\affil[3]{\orgdiv{Departmento de Matemática Aplicada}, \orgname{University of São Paulo}, \orgaddress{\street{Rua do Matão, 1010}, \city{São Paulo}, \postcode{05508-090}, \country{São Paulo, Brazil}}}

\affil[4]{\orgdiv{Independent Researcher}, \orgaddress{\city{London}, \country{United Kingdom}}}

\affil[5]{\orgdiv{Departmento de Ciência da Computação}, \orgname{University of São Paulo}, \orgaddress{\street{Rua do Matão, 1010}, \city{São Paulo}, \postcode{05508-090}, \country{São Paulo, Brazil}}}

\affil[6]{\orgdiv{Departmento de Matemática}, \orgname{University of São Paulo}, \orgaddress{\street{Rua do Matão, 1010}, \city{São Paulo}, \postcode{05508-090}, \country{São Paulo, Brazil}}}

\abstract{

Problems of consensus in multi-agent systems are often viewed as a series of independent, simultaneous local decisions made between a limited set of options, all aimed at reaching a global agreement. Key challenges in these protocols include estimating the likelihood of various outcomes and finding bounds for how long it may take to achieve consensus, if it occurs at all. 

To date, little attention has been given to the case where some agents have no initial opinion. In this paper, we introduce a variant of the consensus problem which includes what we call `agnostic' nodes and frame it as a combination of two known and well-studied processes: voter model and rumour spreading. We show (1) a martingale that describes the probability of consensus for a given colour, (2) bounds on the number of steps for the process to end using results from rumour spreading and voter models, (3) closed formulas for the probability of consensus in a few special cases, along with a polynomial-time algorithm for the case where the number of agnostic vertices is at most logarithmic and (4) that the computational complexity of estimating the probability with a Markov chain Monte Carlo process is $O(n^2 \log n)$ for general graphs and $O(n\log n)$ for Erd\H{o}s-Rényi graphs, resulting in a fully polynomial-time randomized approximation scheme (FPRAS) for estimating the probabilities of consensus. Furthermore, we present experimental results suggesting that the number of runs needed for a given standard error decreases when the number of nodes increases.
}

\keywords{Consensus processes; voter model; rumour spreading; multi-agent consensus}

         
\newcommand{\BibTeX}{\rm B\kern-.05em{\sc i\kern-.025em b}\kern-.08em\TeX}

\maketitle 


\section{Introduction}

In multi-agent consensus problems, agents make a sequence of independent and autonomous choices from a finite set based on their local information. Agents have a shared goal of reaching a consensus state, in which they all represent the same choice. The process is often abstracted as a graph in which nodes represent agents, their colour represent their current choice (or opinion, or state), and edges of the graph represent visibility or influence between agents.

Multi-agent processes on graphs have been shown to have several applications, 
including autonomous robots or drones~\cite{yan2013survey, ismail2018survey},  electrical flow estimation \cite{becchetti2018pooling, cruciani2021phase},
mutation fixation in biology \cite{moran1958random, lieberman2005evolutionary}, among others. 
In the simplest of such processes, the \textit{voter model}, at each point in time (called `round') nodes may change their colour based on the opinion of their neighbours until consensus is reached (e.g. the case where all nodes share the same colour). More formally, this process can be either synchronous or asynchronous. In the asynchronous case, a node is chosen uniformly at random and selects a neighbour proportional to the weight of the edge between then. It then adopts the colour (or opinion) of the chosen neighbour. In the synchronous case, all nodes act simultaneously and independently (i.e., the choice of one does not affect the choice of the other in the same round). Consensus processes in multi-agent systems have been extensively studied (e.g. \cite{martinez2005synchronous,lynch1996distributed, cao2015event,olfati2007consensus, marzagao2017multi}).

Given an initial colour configuration, the probability of consensus and time-bounds for the number of rounds before such consensus is achieved for a given colour are some of the core problems studied in this domain. Extensive results have been found for both synchronous \cite{hassin2001distributed} and asynchronous \cite{cooper2016linear} process, as well as for processes with undecided states (\cite{angluin,perron, clementi_et_al:LIPIcs.MFCS.2018.28, petra1}). In these, nodes do not change directly from one colour to another but transition via an `undecided' state in between them.  

One of the features of the classical voter model is that it assumes all nodes start off with a colour/opinion. In some scenarios, we may want to model a process in which, at the start, some nodes do not have an opinion at all, which may be different from being `undecided' after been given a set of option, as they have not been in contact with any of the opinions in this process. One can think of examples related to election scenarios in which voters do not yet know the candidates and may be therefore influenced by the first contact with a candidate. Or in a process on a blockchain, in which new blocks are mined and the information of new mined blocks traverse the network, possibly competing with other new blocks mined at a similar time. 

In this paper, we introduce and study a variant of the voter model in which nodes can have an extra state, which we call `agnostic' (represented by the colour, say, white). We call it \textit{voter model with agnostic nodes}. Agnostic nodes become gnostic if they choose a gnostic neighbour to copy its colour. Once gnostic, they can never become agnostic again, i.e., if a gnostic node chooses an agnostic one, the gnostic node keeps its current colour and nothing happens. For a precise definition of the problem, see Section~\ref{subsec:agnostic}. For an online interactive visualisation of the process, see \url{https://connections.computer/demo.html}.  The main challenge posed when studying this variant is that there is an asymmetry between states, in that agnostic states can become gnostic but not vice versa. We provide an efficient Monte Carlo algorithm for estimating the probability of consensus in this voter model with agnostic states for any graph and any initial configuration. Specifically, we show that this Monte Carlo algorithm provides a fully polynomial-time randomized approximation scheme for the problem of computing the probability of consensus in the voter model with agnostic states, under very mild assumptions of the transition probabilities. Furthermore, we provide time-bounds for consensus to be achieved. Lastly, we present a polynomial time algorithm for the case where the initial number of agnostic nodes is at most logarithmic in the number of nodes in the graph.

The variant we study can also be seen as a generalisation of the rumour spreading model~\cite{feige1990randomized, fountoulakis2010reliable, acan2015push, panagiotou2017asynchronous}. In it, there are nodes that are `informed' and nodes that are `uninformed' and the process studies the time bounds until all nodes become informed. Like in our variant, an 'informed' node cannot become 'uninformed'. Our voter model with agnostic nodes is analogous to two or more rumours that compete not only to gather more agnostic nodes but also to flip the opinion of other gnostic nodes. Alternatively, one can also see the voter model with agnostic nodes as a combination of two models happening simultaneously: the classical voter model with a rumour spreading process. 

The following motivates the problem with a toy example.

\tikzset{red/.style={draw=red, circle, fill=red!30, inner sep=1}}
\tikzset{blue/.style={draw=blue, circle, fill=blue!30, inner sep=1}}
\tikzset{white/.style={draw=black, circle, fill=white, inner sep=1}}
\tikzset{green/.style={draw=green, circle, fill=green!30, inner sep=1}}
\tikzset{orange/.style={draw=orange, circle, fill=orange!30, inner sep=1}}
\tikzset{yellow/.style={draw=yellow, circle, fill=yellow!30, inner sep=1}}

\begin{example}\label{exm:motivation}
    Consider the graph and initial configuration depicted in Figure \ref{fig:motivational_example}. In this example, each node chooses a neighbour with uniform probability. For example, $v_2$ has $50\%$ chance of choosing $v_1$ and thus becoming blue at round $S_1$ and $50\%$ chance of choosing $v_3$ and thus becoming red at round $S_1$. If a node chooses an agnostic node, their colour does not change. For example, if $v_1$ chooses $v_2$, $v_1$ will stay blue. 

    With that in mind, what is the probability that there will be, say, a red consensus? What can we say about the expected number of rounds until that happens?
    
    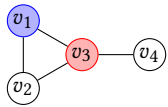
\begin{figure}[b]
    \centering
    \begin{tikzpicture}[scale = 1.5]
    
        \node[blue] (A) at (-0.52, 0.3) {$v_1$};
        \node[white] (B) at (-0.52, -0.3) {$v_2$};
        \node[red] (C) at (0, 0) {$v_3$};
        \node[white] (D) at (0.6, 0) {$v_4$};

        \draw (A) -- (B);
        \draw (A) -- (C);
        \draw (B) -- (C);
        \draw (C) -- (D);
    \end{tikzpicture}
    \caption{A motivational example of an undirected graph with an initial configuration $S_0 = s_0$ consisting of one blue node ($v_1$), one red node ($v_3$), and two agnostic nodes ($v_2$ and $v_4$). Transition probabilities are uniform, i.e., $v_3$ has $\frac{1}{3}$ chance of choosing a given neighbour, whereas $v_4$ chooses $v_3$ and becomes red with probability $1$. What are the probabilities of consensus in this case?}
    \label{fig:motivational_example}
\end{figure}
\end{example}

We will return to Example \ref{exm:motivation} later in this paper. Our main contributions of this paper are as follows:

\begin{enumerate}
    \item We present a general method for computing the consensus probabilities for any voter model with agnostic states (Theorem~\ref{thm:probability}). For the case in which the underlying (weighted) graph represents a reversible Markov chain, this method produces a martingale (Corollary~\ref{thm:martingale}). Both of these appear to be computationally intensive.
    \item Although the previous method may not be efficiently computed in the general case, we prove a closed formula for the consensus probability in the case of a complete graph with an asynchronous process (Corollary~\ref{corollary:complete_graph}).
    \item We show that the existence of agnostic nodes does not affect the complexity bounds for the expected time of achieving consensus, as the agnostic nodes typically disappear faster than consensus is achieved (Lemma~\ref{lemma:time_bounds} and Corollaries~\ref{corollary:rumour_general} and~\ref{corollary:rumour_general_matrix} and Proposition~\ref{prop:rumour_random}). This is analogous to standard known results for rumour spreading processes, and holds under very mild assumptions on transition probabilities.
    \item We present a Markov chain Monte Carlo (MCMC) algorithm (Algorithm~\ref{alg:MCMC} of Section~\ref{subsec:MCMC}) to efficiently compute the consensus probability based on the fact that agnostic nodes disappear quickly (Section~\ref{sec:time_bounds}). This algorithm is shown to provide a fully polynomial-time randomized approximation scheme (see Section~\ref{subsec:complexity}) for the consensus probability computation problem.
    \item We identify and correct a small mistake in~\cite{manohara2025viral}. The authors erroneously claim that certain probabilities are independent, which leads to slightly wrong conclusions. We present a simple fix, though we are forced to settle for a randomized approximation, instead of a deterministic one. The details are present in Section~\ref{subsec:complexity}, with Section~\ref{subsubsec:examples} presenting an example of why their independence property does not hold.
    \item We present a polynomial time algorithm for computing the consensus probability in the case where the initial number of agnostic nodes is logarithmic in the size of the underlying graph.
    \item We present an experimental analysis to support results using Algorithm~\ref{alg:MCMC}, showing that few runs are necessary to obtain good probability estimates (Section~\ref{sec:experiments}). It also suggests estimates get better as graph sizes increase.
\end{enumerate}

\section{Background and Main Definitions}

   In this section, we present concepts and results from the literature that will be used in subsequent sections. We first introduce the classical version of consensus protocol used in this paper, also known as voter model~\cite{donnelly1983finite, nakata1999probabilistic, hassin2001distributed, cooper2016linear,DBLP:journals/talg/KanadeMS23,DBLP:conf/analco/OliveiraP19}, in which all vertices have an initial opinion. Then, we propose a variant of the voter model in which some vertices do not have an initial opinion, which we label agnostic vertices. The voter model has been widely studied in the context of multi-agent systems. The winning probabilities of each colour and bounds on the convergence time were obtained for undirected graphs by Hassin \& Peleg~\cite{hassin2001distributed}. Cooper \& Rivera extended this work to the linear voting model, which captures digraphs as well as several similar consensus processes~\cite{cooper2016linear}. 
    
    \subsection{Classical Voter Model} \label{subsec:voter}
    The (pull) \textdef{voter model} defines a round-based consensus process on a strongly connected directed graph $G = (V,E)$.\footnote{Henceforth, we assume all graphs are strongly connected unless stated otherwise.} In such processes agents are represented by nodes in this graph. At each round, each node has a colour associated to it, representing the respective agent's current state (or opinion). Their goal is to reach consensus, i.e., a situation where every agent is in the same state. To that end, at each round, all agents update their state synchronously based on the colour of their out-neighbours.\footnote{For precision, we consider that agents change their state at the end of each round, after all nodes have made their decisions.}
    The probability that $v$ copies colour of node $u$ in a given round is represented by the weight of edge $(v,u)$. The weights of edges starting at a given node are assumed to be positive and to sum to $1$. We collate all these probabilities in an out-matrix $H$, which can be also seen as the adjacency matrix of $G$ where entry $H(v, u)$ represents the weight of edge $(v, u)$. We adopt the notation $H(v,u) = 0$ if $(v,u) \notin E$, and note that self loops are allowed and thus $v$ may adopt its own colour. Once reached, a consensus is stable. 

    Let $\Colour = \{c_1, \dots, c_k\}$ be the set of all possible colours on a consensus process. A \textdef{configuration} on a graph $G=(V,E)$ is a function $\state \in \Colour^V$ that associates each node $v \in V$ with a colour $c \in \Colour$, i.e., $\state(v)$ represents $v$'s colour in configuration $\state$. More formally, a process is a sequence of random variables $\stochastic$, with  $\STATE_{t+1}\in \Colour^V$ being a configuration generated based on $\STATE_t$. We say colour $i$ \textdef{wins} the process if a configuration $\STATE_t =\state$, such that $\state(v) = i$ for all $v$, is reached.
    Here, we assume processes converge with probability $1$. For discussion of fringe cases and work related to graphs in which processes may not converge, see, e.g., \cite{marzag2021influence}.

    Observe that the out-matrix $H$ of the graph $G$ can be seen as the transition matrix of a time homogeneous Markov chain (e.g., see Chapter $6$, \citenptext{grimmett2001probability})  representing the probabilities of one round in the consensus process~\cite{cooper2016linear}. If $G$ is strongly connected, this Markov chain is irreducible and finite, so there exists a unique stationary distribution $\mu$ of $H$, that is, there is a row vector $\mu$ such that $\mu H=\mu$. We call the values $\mu(v)$ the influence of the vertex $v$ in the consensus protocol. In this context, previous work~\cite{cooper2016linear}, show that the winning probabilities of each colour can be determined by the initial configuration only and are given by the following proposition.

    \begin{proposition}[\citenp{cooper2016linear}]\label{prop:nicola's-result}
        Consider a consensus process on a strongly connected graph $G$ (further, we assume $G$ is such that consensus is always achieved for all initial configurations), with associated adjacency matrix $H$ and $\mu$ its unique stationary distribution. Assume the initial configuration is given by $\state\in\{c_1,\dots,c_k\}^{V}$. 
        Then, we have that the winning probability of colour $c_i$ is: 
        \[\mathbb{P}(\textrm{colour }c_i\textrm{ wins}\mid \STATE_0 = \state) = \sum_{v \in V, \STATE(v)=c_i}\mu(v)\]
    \end{proposition}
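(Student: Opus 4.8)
The plan is to exhibit a bounded martingale whose initial value equals the claimed sum and whose terminal value is the indicator that colour $c_i$ wins, and then invoke the optional stopping theorem. The whole argument reduces the probabilistic question to the algebraic fact that $\mu$ is a left eigenvector of $H$.

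First I would define, for each round $\round$, the total influence currently carried by colour $c_i$:
\[
M_\round = \sum_{v \in V : \STATE_\round(v) = c_i} \mu(v).
\]
Since $\mu$ is a probability distribution, $M_\round \in [0,1]$ for every $\round$, so $\{M_\round\}_{\round\geq 0}$ is bounded. The crucial step is to check that it is a martingale with respect to the natural filtration of $\{\STATE_\round\}$. In the pull voter model each node $v$ independently copies a random out-neighbour, so $\mathbb{P}(\STATE_{\round+1}(v)=c_i \mid \STATE_\round)=\sum_u H(v,u)\mathbf{1}[\STATE_\round(u)=c_i]$. Taking expectations of $M_{\round+1}$ and interchanging the two summations gives
\begin{align*}
\mathbb{E}[M_{\round+1}\mid \STATE_\round]
&= \sum_v \mu(v)\sum_u H(v,u)\,\mathbf{1}[\STATE_\round(u)=c_i] \\
&= \sum_u \mathbf{1}[\STATE_\round(u)=c_i]\,(\mu H)(u).
\end{align*}
Here stationarity does all the work: because $\mu H=\mu$, the factor $(\mu H)(u)$ collapses to $\mu(u)$, and the right-hand side is exactly $M_\round$. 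Hence $\mathbb{E}[M_{\round+1}\mid\STATE_\round]=M_\round$.

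Let $\roundend$ be the first round at which consensus is reached, which is almost surely finite under the standing assumption that the process converges with probability $1$. Since $\{M_\round\}$ is a bounded martingale and $\roundend<\infty$ a.s., the optional stopping theorem yields $\mathbb{E}[M_{\roundend}]=M_0=\sum_{v:\,\state(v)=c_i}\mu(v)$. At round $\roundend$ the configuration is monochromatic, so $M_{\roundend}=1$ when the consensus colour is $c_i$ and $M_{\roundend}=0$ otherwise; that is, $M_{\roundend}=\mathbf{1}[c_i\text{ wins}]$, whence $\mathbb{E}[M_{\roundend}]=\mathbb{P}(c_i\text{ wins}\mid\STATE_0=\state)$. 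Equating the two evaluations of $\mathbb{E}[M_{\roundend}]$ proves the proposition.

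The only genuine subtlety is the martingale identity, and it hinges entirely on the left-eigenvector property $\mu H=\mu$; the summation interchange must be arranged so that the columns of $H$ get weighted by the $\mu(v)$ factors, producing $\mu H$ rather than $H\mu$. The invocation of optional stopping is routine given boundedness and almost-sure finiteness of $\roundend$, so I do not expect a real obstacle there. It is worth noting that the same weighted-influence martingale will reappear (in a suitably adapted form) when the authors derive their martingale for the agnostic variant in Theorem~\ref{thm:martingale}.
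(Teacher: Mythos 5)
Your proof is correct and is essentially the same argument as the source: the paper states this proposition as a cited result of Cooper \& Rivera without reproving it, and your influence-weighted martingale $M_\round=\sum_{v:\STATE_\round(v)=c_i}\mu(v)$ combined with optional stopping is exactly the standard proof, which is also precisely the strategy the paper itself deploys for its Theorem~\ref{thm:martingale} on the agnostic variant (there with $\mathbb{P}(R(v)\mid S_t)$ in place of the indicator). No gaps: the martingale identity needs only $\mu H=\mu$ and a finite-sum interchange, and optional stopping applies since $M_\round$ is bounded and the consensus time is almost surely finite by the standing assumption.
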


    We can see as a corollary, that for (non-bipartite, connected) undirected graphs, the probability of a given colour winning is simply the number of incident edges in nodes of that colour divided by the $2E$, where $E$ is the number of edges in the graph \cite[Corollary 2.2 and Section 2.3]{hassin2001distributed}. Example \ref{exm:other_example} discusses the idea applied to our motivating example.

    \begin{example}\label{exm:other_example}
    Consider the modified version of Example \ref{exm:motivation} with a different initial configuration: agnostic nodes are instead gnostic and coloured, say, orange. In other words, assume we are under the assumptions of the classical voter model with 3 colours. We have $\mu = \frac{1}{8}(2, 2, 3, 1)$. Then, from the we have the probability of red winning being $\frac{3}{8}$, of blue winning being $\frac{2}{8}$ and of orange winning being $\frac{3}{8}$. 
\end{example}

    \subsection{Rumour Spreading Process}

    A rumour spreading process on a graph represents the process of information `travelling' across the edges to eventually reach all nodes on a graph. More formally, and using the notation for the voter model, we would have two colours, one being `red' and the other representing a node being `uninformed'. Many strategies of information transmission were designed, such as push, pull and push-and-pull. In this work, we will concentrate, as mentioned in Section~\ref{subsec:voter}, on the pull protocol. In it, if node $v$ selects $u$, then $v$ becomes informed if $u$ is informed, otherwise $v$ is unchanged (informed or uninformed). The process can be synchronous or asynchronous.
    
    For the push protocol, node $v$ selects $u$ and pushes its state towards $u$. If $v$ is agnostic, $u$ retains its state, otherwise $u$ adopts the state of $v$. Observe that the push protocol cannot be done synchronously for consensus problems as it is not clear how to resolve the possible ambiguity (nodes $v_1$ and $v_2$ with different gnostic states, both select the same node $u$). Asynchronous push-and-pull would be defined as is standard in rumour spreading theory. A random agnostic node is chosen and performs a pull, and a random gnostic node is chosen and performs a push. Observe that the asynchronous push-and-pull would eventually become equivalent to asynchronous push as agnostic vertices disappear. Lastly, observe that, just like the push protocol, push-and-pull cannot be done synchronously.

    We will use standard techniques and proof ideas of the rumour spreading literature to show that the rumour spreading process is fast in general for the pull version. These are done in Corollaries~\ref{corollary:rumour_general} and~\ref{corollary:rumour_general_matrix} (consequences of Lemma~\ref{lemma:rumour_general}) as well as  Proposition~\ref{prop:rumour_random} where the techniques and methods from the literature allow us to obtain an $O(n\log(n))$ bound for general graphs~\footnote{If we assume the transition matrix is arbitrary, an extra $1/p$ factor is included, where $p$ is the smallest entry in the transition matrix.} and a $O(\log(n))$ bound for random graphs (in the synchronous case, the asynchronous case adds an extra factor $n$ multiplying both). For other known bounds from the literature, see Sections~\ref{sec:discussion} and~\ref{sec:related_work}.
    
    \subsection{Voter Model with Agnostic Nodes}\label{subsec:agnostic}
    
    We now introduce the main concept to be explored in this work. The main difference of the process with agnostic nodes is that there is an asymmetry between gnostic and agnostic nodes: an agnostic node can become gnostic but gnostic nodes cannot become agnostic. A more precise definition is given as follows.

    \begin{definition}[Voter Model with Agnostic Nodes]\label{def:agnostic}
        A (pull) voter model with agnostic nodes generalises the notion of (pull) voter model by changing the rule with which nodes update their colour. As before, at each round $t$, each node $v$ chooses a one of its out-neighbours $u$ proportionally to the weight of the edge in $G$. However,
        \begin{enumerate}
            \item If $u$ is agnostic, $v$ does not change its colour. 
            \item If $u$ is gnostic, $v$ copies colour of $u$.
        \end{enumerate}
    \end{definition}
    At the same time that Definition \ref{def:agnostic} can be seen as a generalisation of the voter model, it can also be seen as a generalisation of the rumour spreading process. A node that is `uninformed' behaves equivalently to an `agnostic' node. The difference being, of course, that we consider more than one rumour spreading in the network, and competing with each other at the same time it influences agnostic (or uninformed) nodes. 

    We now go back to Example \ref{exm:motivation} and solve it by simply accounting for all possible states and their probabilities. 
    
    \begin{example}[Example \ref{exm:motivation} continued]\label{exm:motivation2}
        Recall Example \ref{exm:motivation}. Here we solve it `by hand' to motivate the introduction of a martingale property for this model.  Figure \ref{fig:example1solution} shows all possible configurations for $S_1$, the probabilities of reaching them ($a_i$) and the probability that red wins from each of them ($b_i$) calculated by applying Proposition \ref{prop:nicola's-result}. Note that $\mu = \frac{1}{8}(2, 2, 3, 1)$. We have that $\mathbb{P}(\textrm{red wins}\mid \STATE_0 = \state) = \frac{5}{8}$. 
\tikzset{red/.style={draw=red, circle, fill=red!30, inner sep=0.5}}
\tikzset{blue/.style={draw=blue, circle, fill=blue!30, inner sep=0.5}}

\begin{figure}[b]
\centering
\setlength{\tabcolsep}{2pt} 
\renewcommand{\arraystretch}{1.3} 

\begin{tabular}{cccc}
\begin{minipage}[b]{.23\linewidth}\centering
\begin{tikzpicture}
    \node[red] (A) at (-0.52, 0.3) {\tiny $v_1$};
    \node[red] (B) at (-0.52, -0.3) {\tiny $v_2$};
    \node[red] (C) at (0, 0) {\tiny $v_3$};
    \node[red] (D) at (0.6, 0) {\tiny $v_4$};
    \draw (A) -- (B);
    \draw (A) -- (C);
    \draw (B) -- (C);
    \draw (C) -- (D);
\end{tikzpicture}

\footnotesize (a)\; $a_1=\dfrac{1}{6},\,b_1=1$
\end{minipage}
&
\begin{minipage}[b]{.23\linewidth}\centering
\begin{tikzpicture}
    \node[blue] (A) at (-0.52, 0.3) {\tiny $v_1$};
    \node[red] (B) at (-0.52, -0.3) {\tiny $v_2$};
    \node[red] (C) at (0, 0) {\tiny $v_3$};
    \node[red] (D) at (0.6, 0) {\tiny $v_4$};
    \draw (A) -- (B);
    \draw (A) -- (C);
    \draw (B) -- (C);
    \draw (C) -- (D);
\end{tikzpicture}

\footnotesize (b)\; $a_2=\dfrac{1}{6},\,b_2=\dfrac{3}{4}$
\end{minipage}
&
\begin{minipage}[b]{.23\linewidth}\centering
\begin{tikzpicture}
    \node[red] (A) at (-0.52, 0.3) {\tiny $v_1$};
    \node[blue] (B) at (-0.52, -0.3) {\tiny $v_2$};
    \node[red] (C) at (0, 0) {\tiny $v_3$};
    \node[red] (D) at (0.6, 0) {\tiny $v_4$};
    \draw (A) -- (B);
    \draw (A) -- (C);
    \draw (B) -- (C);
    \draw (C) -- (D);
\end{tikzpicture}

\footnotesize (c)\; $a_3=\dfrac{1}{6},\,b_3=\dfrac{3}{4}$
\end{minipage}
&
\begin{minipage}[b]{.23\linewidth}\centering
\begin{tikzpicture}
    \node[blue] (A) at (-0.52, 0.3) {\tiny $v_1$};
    \node[blue] (B) at (-0.52, -0.3) {\tiny $v_2$};
    \node[red] (C) at (0, 0) {\tiny $v_3$};
    \node[red] (D) at (0.6, 0) {\tiny $v_4$};
    \draw (A) -- (B);
    \draw (A) -- (C);
    \draw (B) -- (C);
    \draw (C) -- (D);
\end{tikzpicture}

\footnotesize (d)\; $a_4=\dfrac{1}{6},\,b_4=\dfrac{1}{2}$
\end{minipage}
\\[0.6cm]
\begin{minipage}[b]{.23\linewidth}\centering
\begin{tikzpicture}
    \node[red] (A) at (-0.52, 0.3) {\tiny $v_1$};
    \node[red] (B) at (-0.52, -0.3) {\tiny $v_2$};
    \node[blue] (C) at (0, 0) {\tiny $v_3$};
    \node[red] (D) at (0.6, 0) {\tiny $v_4$};
    \draw (A) -- (B);
    \draw (A) -- (C);
    \draw (B) -- (C);
    \draw (C) -- (D);
\end{tikzpicture}

\footnotesize (e)\; $a_5=\dfrac{1}{12},\,b_5=\dfrac{5}{8}$
\end{minipage}
&
\begin{minipage}[b]{.23\linewidth}\centering
\begin{tikzpicture}
    \node[blue] (A) at (-0.52, 0.3) {\tiny $v_1$};
    \node[red] (B) at (-0.52, -0.3) {\tiny $v_2$};
    \node[blue] (C) at (0, 0) {\tiny $v_3$};
    \node[red] (D) at (0.6, 0) {\tiny $v_4$};
    \draw (A) -- (B);
    \draw (A) -- (C);
    \draw (B) -- (C);
    \draw (C) -- (D);
\end{tikzpicture}

\footnotesize (f)\; $a_6=\dfrac{1}{12},\,b_6=\dfrac{3}{8}$
\end{minipage}
&
\begin{minipage}[b]{.23\linewidth}\centering
\begin{tikzpicture}
    \node[red] (A) at (-0.52, 0.3) {\tiny $v_1$};
    \node[blue] (B) at (-0.52, -0.3) {\tiny $v_2$};
    \node[blue] (C) at (0, 0) {\tiny $v_3$};
    \node[red] (D) at (0.6, 0) {\tiny $v_4$};
    \draw (A) -- (B);
    \draw (A) -- (C);
    \draw (B) -- (C);
    \draw (C) -- (D);
\end{tikzpicture}

\footnotesize (g)\; $a_7=\dfrac{1}{12},\,b_7=\dfrac{3}{8}$
\end{minipage}
&
\begin{minipage}[b]{.23\linewidth}\centering
\begin{tikzpicture}
    \node[blue] (A) at (-0.52, 0.3) {\tiny $v_1$};
    \node[blue] (B) at (-0.52, -0.3) {\tiny $v_2$};
    \node[blue] (C) at (0, 0) {\tiny $v_3$};
    \node[red] (D) at (0.6, 0) {\tiny $v_4$};
    \draw (A) -- (B);
    \draw (A) -- (C);
    \draw (B) -- (C);
    \draw (C) -- (D);
\end{tikzpicture}

\footnotesize (h)\; $a_8=\dfrac{1}{12},\,b_8=\dfrac{1}{8}$
\end{minipage}
\end{tabular}

\caption{Every configuration $s_{1i}$ which can be reached from $S_0$ in Example~\ref{exm:motivation} after one round, i.e. $a_i:=\mathbb{P}(S_1=s_{1i}|S_0)>0$. The probability of a red consensus in each case is denoted by $b_i$ and can be calculated by applying Proposition~\ref{prop:nicola's-result}. Therefore, the probability of red to win in Example~\ref{exm:motivation} is $\sum_{i=1}^8a_ib_i=\frac{5}{8}$.}
\label{fig:example1solution}
\end{figure}
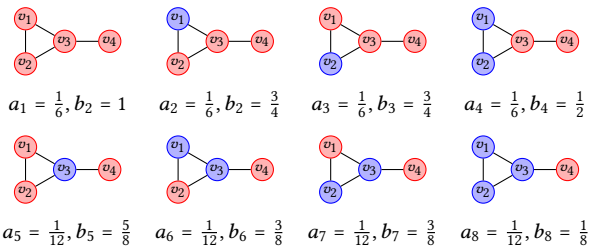

    \end{example}

\begin{remark}\label{rmk:one_round_probabilities}
    Example \ref{exm:motivation2} highlights the following: if we know the probabilities that agnostic nodes become, say, red, once they become gnostic, can we amend formula in Proposition \ref{prop:nicola's-result} so that it is valid for the voter model with agnostic nodes? In Figure \ref{fig:motivational_example}, such probabilities for nodes $v_2$ and $v_4$ are, respectively, $\frac{1}{2}$ and $1$. Multiplying these values with the importance of each node and summing them up gives us, $0\mu(v_1) + \frac{1}{2}\mu(v_2) + 1\mu(v_3) + 1\mu(v_4) = \frac{5}{8}$, which coincides with the probability of red winning. We will show that this is not a coincidence in Theorem \ref{thm:martingale}.
\end{remark}

\section{Probability of Consensus}

In this section, we will provide a general study on the probability of consensus for the (pull) voter model with agnostic nodes.
Our theorems deal with the case with only two colours (red and blue) for gnostic nodes. It is easy to see that the generalisation for more colours is immediate.

\subsection{Exact computation of consensus probability}

Theorem~\ref{thm:probability} provides a method for computing the consensus probability associated with the voter model with agnostic nodes, as a limit involving the left eigenvector of transition matrix $H$, the initial state $S_0$ and quantities corresponding to probabilities that a red vertex selects an agnostic neighbour at round $t$. Theorem~\ref{thm:martingale}, then, assumes a reversibility property of matrix $H$ and produces the same type of martingale (see, e.g., \cite[Chapter 12]{grimmett2001probability}), as is usual in the literature. The original work~\cite{gauy2025vmmrs} presented only the martingale and the immediate consensus probability formula that follows from it.

\begin{theorem}[General consensus probability formula] \label{thm:probability}
    Let $G$ be a graph with vertex set $V$. Suppose we have a (pull) voter model with agnostic states (either synchronous or asynchronous) with associated matrix $H$ on $G$, and $\mu$ such that $\mu H = \mu$. Let $S_t$ be the state of each vertex at time $t$. Define $\mathcal{S}_t^x(v)$ as the probability that vertex $v$ has colour $x$ at round $t$. Moreover, define $P_t(v)$ the probability that vertex $v$ is red at round $t$ and selected an agnostic neighbour at round $t$. If $G$ is such that the process always converges~\footnote{For some $G$, such as bipartite graphs, the process may never converge. For general results on convergence, see \cite{kohan2017team, marzag2021influence, manohara2025viral}.} on $G$, then, if $p$ denotes the consensus probability, the following holds:
    \begin{equation}
        p = \mu S_0 + \mu\sum_{i=0}^\infty P_t
    \end{equation}
\end{theorem}

\begin{proof}
      
      Let $\mathcal{S}_t^x$ be a vector denoting the probability that a vertex at time $t$ gets colour $x$. As every vertex selects a neighbour independently, with probability given by matrix H, to adopt its colour, given knowledge of $\mathcal{S}_t^x$, we can compute the probability that a given vertex, at time $t$, selects a neighbour with colour $x$ by doing $H\mathcal{S}_t^x$. Let $x=r$ to denote the colour red and we can associate $\mathcal{S}_{t+1}^x$ and $\mathcal{S}_t^x$ as follows. For a vertex $v$ to have colour red at time $t+1$, it either chose a red neighbour in round $t$ or it had colour red at round $t$ and selected an agnostic neighbour. Let $P_t(v)$ denote the probability that vertex $v$ was red at round $t$ and selected an agnostic neighbour at round $t$. Sadly, we cannot relate $P_t(v)$ and $\mathcal{S}_t^r(v)$ as independence does not hold. Still, since $P_t$ is a disjoint event from selecting a red neighbour in round $t$, we can write: $S_{t+1}(v) = (HS_t)(v) + P_t(v)$, where we use the definition of $P_t$ and the previous observation that the probability of selecting a red neighbour is equal to the $v$ entry of $H\mathcal{S}_t^r$. Thus, we can write in vector format the expression: $\mathcal{S}_{t+1}^r = H\mathcal{S}_t^r + P_t$.
      
      We can compute the probability of red consensus by computing the limit of $\mathcal{S}_{t}^r$, when $t$ goes to infinity, as a function of $P_t$. For that, we note that we can rewrite the previous formula as: $\mathcal{S}_{t+1}^r - H\mathcal{S}_{t}^r = P_t$. Now, we can sum over $t$ from $0$ to a given value $\tau$ and get: 
      $\sum_{t=0}^{\tau}\mathcal{S}_{t+1}^r - H\mathcal{S}_{t}^r = \sum_{t=0}^{\tau}P_t$. 
      The key idea now is to take the dot product of both sides by the left eigenvector $\mu$  of H (with eigenvalue 1) to obtain the following: 
      $\sum_{t=0}^{\tau}\mu \mathcal{S}_{t+1}^r - \mu H\mathcal{S}_{t}^r = \sum_{t=0}^{\tau}\mu P_t$. 
      Now, observe that $\mu H=\mu$, so the terms on the left side mostly cancel each other and we are left with: $\mu \mathcal{S}_{\tau+1}^r = \mu \mathcal{S}_0^r + \sum_{t=0}^{\tau}\mu P_t$. 
      Now, we assume that the process converges, which implies that the limit of $\mathcal{S}_{t}^r$ is a vector where all values are equal to the probability $p$ that red consensus is reached. Therefore, the limit of $\mu \mathcal{S}_t^r$ when $t$ goes to infinity is exactly $p$ (as the limit is linear). 
      Thus, the previous equation results in $p = \mu \mathcal{S}_0^r + \mu \sum_{t=0}^{\infty} P_t$. 
\end{proof}

As observed in the above proof, computing $P_t$ appears to be hard. We remark that the authors of~\cite{manohara2025viral} erroneously claim that 'since picking is independent across rounds', then $P_t(v)$ can be computed as the product of probability that vertex $v$ is red at round $t$ (given by $\mathcal{S}_t^r(v)$) and the probability that it selects an agnostic neighbour at round $t$ (given by the $v$-entry of $H\mathcal{S}_t^r$). However, these probabilities are not independent and computing $P_t$ does not appear to be simple. Section~\ref{subsubsec:examples} presents a simple counterexample showing that this independence property does not hold. Section~\ref{subsec:complexity} presents, among other theorems, a simple correction to the results of~\cite{manohara2025viral}.

We now state the Martingale property, which assumes a reversibility property of matrix $H$. For completeness, we include the proof of the martingale property, already present in the original paper~\cite{gauy2025vmmrs}. In Section~\ref{subsubsec:examples}, we also present a simple example showing that reversibility is required for the martingale property to hold.

\begin{theorem}[Martingale Consensus Probability Formula]] \label{thm:martingale}
    Let $G$ be a graph with vertex set $V$. Suppose we have a (pull) voter model with agnostic states (either synchronous or asynchronous) with associated matrix $H$ on $G$, and $\mu$ such that $\mu H = \mu$. Suppose, furthermore, that the voter model is a reversible Markov chain, or in other words that, $H(v,w)\mu(v)=H(w,v)\mu(w)$. Let $S_t$ be the state of each vertex at time $t$. Moreover, define $R(v)$ as the event where $v$ is coloured red once it is gnostic. Then the following sequence is a Martingale with respect to $S_t$:
    \begin{equation}
        X_t = \sum_{v\in V}\mu(v)\mathbb{P}(R(v)\mid S_t)
    \end{equation}

    As a result, if $G$ is such that the process always converges on $G$, then
    \begin{equation}\label{eq:consensus}
        \mathbb{P}(\text{red wins}) = X_0.
    \end{equation}
\end{theorem}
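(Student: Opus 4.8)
The plan is to verify the one-step identity $\mathbb{E}[X_{t+1}\mid S_t]=X_t$ directly and then read off \eqref{eq:consensus} from the bounded martingale convergence theorem. The first thing I would pin down is the meaning of $\mathbb{P}(R(v)\mid S_t)$ as a function of the current configuration: for a vertex $v$ that is already gnostic it is the indicator $\mathbf{1}[S_t(v)=\text{red}]$ of its present colour, while for an agnostic $v$ it is the probability $h_v(S_t)$ that $v$ is red in the first round in which it becomes gnostic. I would separate the sum defining $X_t$ into these two kinds of contributions, because they behave very differently and only their combination is a martingale; the interaction at the moment an agnostic vertex turns gnostic is harmless, since the first-gnostic colour equals the current colour precisely at that instant.

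For the agnostic part I expect no difficulty and, in particular, no use of reversibility. For fixed agnostic $v$, the term $h_v(S_t)=\mathbb{P}(R(v)\mid S_t)$ is the conditional probability of a fixed future event given the present state; by the Markov property it is a function of $S_t$, and by the tower property it is harmonic, so $\mathbb{E}[\mathbb{P}(R(v)\mid S_{t+1})\mid S_t]=\mathbb{P}(R(v)\mid S_t)$. Concretely, if $v$ selects a gnostic neighbour it is absorbed to the value $1$ or $0$ (its new colour), and if it selects an agnostic neighbour it stays agnostic and its value updates to $h_v(S_{t+1})$; the defining recursion of $h_v$ is exactly this harmonic relation, so each agnostic term is individually conserved in expectation.

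The real work, and the step where I expect reversibility to be indispensable, is the gnostic part: one must show $\sum_{v\text{ gnostic}}\mu(v)\,\mathbb{E}[\mathbf{1}[S_{t+1}(v)=\text{red}]\mid S_t]=\sum_{v\text{ gnostic red}}\mu(v)$. Expanding the pull rule, a gnostic $v$ is red at time $t+1$ if it selects a gnostic red neighbour, or if it selects an agnostic neighbour and is itself red. Writing $G$, $A$, $G_r$ for the gnostic, agnostic, and gnostic-red vertices at time $t$ and using $\sum_u H(v,u)=1$ to rewrite the target, I expect the required identity to collapse, after cancelling common terms, to
\[
\sum_{v\in G}\sum_{u\in G_r}\mu(v)H(v,u)=\sum_{v\in G_r}\sum_{u\in G}\mu(v)H(v,u).
\]
Here reversibility $\mu(v)H(v,u)=\mu(u)H(u,v)$ applied to the left-hand side, followed by relabelling the summation indices, should produce exactly the right-hand side. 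This cancellation of the ``cross flow'' between $G$ and $G_r$, generated by gnostic nodes that pull from agnostic neighbours and hence retain their colour, is what I anticipate to be the main obstacle: stationarity $\mu H=\mu$ alone suffices in the classical model but fails once $A$ is nonempty, because the agnostic vertices break the usual telescoping and leave precisely this directed-flow term to be balanced.

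Finally, to obtain \eqref{eq:consensus} I would note that $\mu$ is a probability vector and each $\mathbb{P}(R(v)\mid S_t)\in[0,1]$, so the martingale is bounded, $0\le X_t\le 1$. Since the process reaches consensus almost surely by hypothesis, at a red consensus every vertex is gnostic and red, giving $X_\infty=\sum_{v\in V}\mu(v)=1$, while at a blue consensus $X_\infty=0$; that is, $X_\infty=\mathbf{1}[\text{red wins}]$. The bounded martingale convergence theorem (equivalently, optional stopping at the consensus time) then yields $X_0=\mathbb{E}[X_\infty]=\mathbb{P}(\text{red wins})$. I would treat the asynchronous case identically, replacing the simultaneous update by the single chosen vertex, since the computations in the two steps above are linear in the per-vertex transition probabilities.
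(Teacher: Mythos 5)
Your proposal is correct and follows essentially the same route as the paper's proof: the same split of $X_t$ into agnostic terms (conserved individually by the tower/Markov property, no reversibility needed) and gnostic terms, with the gnostic part reduced via $\sum_u H(v,u)=1$ to a cross-flow balance identity that is exactly where reversibility enters — your identity $\sum_{v\in G}\sum_{u\in G_r}\mu(v)H(v,u)=\sum_{v\in G_r}\sum_{u\in G}\mu(v)H(v,u)$ is, after cancelling the common $G_r\times G_r$ diagonal, the paper's identity between the red-to-blue and blue-to-red flows. The concluding step via the bounded martingale convergence theorem is equivalent to the paper's use of Doob's optional stopping at the consensus time.
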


\begin{proof}
First note that if $v$ is already red in $S_t$, then $\mathbb{P}(R(v)\mid S_t) = 1$ and if already blue, $\mathbb{P}(R(v)\mid S_t) = 0$. Let $S_t(v)=\{0,1,2\}$ denote the state of each vertex $v$ at time $t$, with $0$ representing color white, $1$ representing red, and $2$ representing blue. We want to show that \begin{align}\label{eq:main}\mathbb{E}(X_{t+1}|S_t)= X_t.\end{align} 
Observe that $X_{t+1}=\sum_{v\in V}\mu(v)\mathbb{P}(R(v)| S_{t+1})$, and we have, by linearity of expectation: 
$$\mathbb{E}(X_{t+1}\mid S_t) = \sum_{v\in V}\mu(v)\mathbb{E}(\mathbb{P}(R(v)\mid S_{t+1})\mid S_t).$$ 

We distinguish two cases: 1) $S_t(v)$ gnostic (that is equal $1$ or $2$) and 2) $S_t(v)$ agnostic (equal $0$). For the second case observe that 
$$\mathbb{P}(R(v)\mid S_t) = \sum_{S_{t+1}}\mathbb{P}(R(v)\mid S_{t+1})\mathbb{P}(S_{t+1}\mid S_t)$$
by law of total probability. The second expression is the definition of the conditional expectation so we have 
$$\mathbb{P}(R(v)\mid S_t) = \mathbb{E}(\mathbb{P}(R(v)\mid S_{t+1})\mid S_t)$$
for the case where we assume $S_t(v)=0$.
We have
$$\mathbb{E}(X_{t+1}\mid S_t) = \sum_{v\in V}\mu(v)\mathbb{E}(\mathbb{P}(R(v)\mid S_{t+1})\mid S_t).$$

Since $X_t = \sum_{v\in V}\mu(v)\mathbb{P}(R(v)\mid S_t)$, the goal is then to show that: 
$$\sum_{v\in V}\mu(v)\mathbb{P}(R(v)\mid S_t) = \sum_{v\in V}\mu(v)\mathbb{E}(\mathbb{P}(R(v)\mid S_{t+1})\mid S_t)$$

For that we split both left and right side in two sums:
$$\sum_{S_t(v) = 0}\mu(v)\mathbb{P}(R(v)\mid S_t) = \sum_{S_t(v)=0}\mu(v)\mathbb{E}(\mathbb{P}(R(v)\mid S_{t+1})\mid S_t)$$

and:
$$\sum_{S_t(v)\in\{1,2\}}\mu(v)\mathbb{P}(R(v)\mid S_t) = \sum_{S_t(v)\in\{1,2\}}\mu(v)\mathbb{E}(\mathbb{P}(R(v)\mid S_{t+1})\mid S_t)$$

that is, we look at the graph at time $t$ and split the vertices between agnostic and gnostic and we will show that the sum is equal in each part. The first case of only agnostic vertices follows from $\mathbb{P}(R(v)\mid S_t) = \mathbb{E}(\mathbb{P}(R(v)\mid S_{t+1})\mid S_t)$ which is why have shown this equality. The second case is done in the rest of the text and requires reversibility.

The equation $\mathbb{E}(X_{t+1}|S_t)= X_t$ then becomes equivalent to showing that:
$$\sum_{S_{t}(v)=1}\mu(v) = \sum_{S_{t}(v)\in\{1,2\}} \mu(v)\mathbb{E}(\mathbb{P}(R(v)\mid S_{t+1})\mid S_t)$$

because we have already shown that the terms on the left and right side where $S_t(v)=0$ are equal.

We now turn to the case where $S_t(v)\neq 0$. Observe that 
$$\mathbb{E}(\mathbb{P}(R(v)\mid S_{t+1})\mid S_t) = \mathbb{P}(S_{t+1}(v)=1\mid S_t),$$ 
as if $S_t(v)$ is gnostic, then $S_{t+1}(v)$ is also gnostic. Now, in terms of $H$, we can write $\mathbb{P}(S_{t+1}(v)=1|S_t)$ equal to the sum $\sum_{S_{t}(w)\in\{0,1\}} H(v,w)$ if $S_t(v)=1$ and equal to $\sum_{S_{t}(w)=1} H(v,w)$ if $S_t(v)=2$.\footnote{These specific formulas assume that we are dealing with a synchronous case. The asynchronous case would have the probability be $\frac{n-1}{n}+\frac{1}{n}\sum_{S_{t}(w)\in\{0,1\}} H(v,w)$ when $S_t(v)=1$ and $\frac{1}{n}\sum_{S_{t}(w)=1} H(v,w)$ when $S_t(v)=2$. The equation that we need to verify stays unchanged after simple manipulation.} Thus, we just need to verify the equality: 
$$\sum_{\scriptscriptstyle S_{t}(v)=1}\mu(v) = \sum_{\scriptscriptstyle S_{t}(v)=1} \mu(v)\sum_{\scriptscriptstyle S_t(w)\in\{0,1\}} H(v,w) +  \sum_{\scriptscriptstyle S_{t}(v)=2} \mu(v)\sum_{\scriptscriptstyle S_{t}(w)=1} H(v,w).$$ 
Using the fact that $\sum_{w} H(v,w) = 1$ we can rewrite the last equality as being equivalent to:

$$\sum_{S_{t}(v)=1} \mu(v)\sum_{S_t(w)=2} H(v,w) = \sum_{S_{t}(v)=2} \mu(v)\sum_{S_t(w)=1} H(v,w)$$

This last equality is equivalent\footnote{Note that the expression has to hold for the case where there is a single blue and red nodes which shows one side. The other side is similarly simple} to $H(v,w)\mu (v) = H(w,v)\mu (w)$ for every pair of neighbours $w,v$. This holds for example when the graphs are undirected and the choice of neighbour is uniform. Thus, \eqref{eq:main} follows. 

Now we assume $G$ is such that every process converges and prove Equation~\ref{eq:consensus}. It is a known technique to use Doob's Stopping Theorem to go from a martingale to the probability of consensus, and that is what we use to show Equation \ref{eq:consensus}. Doob's Stopping Theorem guarantees that $\mathbb{E}(X_\tau) = \mathbb{E}(X_0) = X_0$, where $\tau$ is the time where consensus is reached. Together with the fact that $\mathbb{E}(X_\tau) = \mathbb{P}(\text{red wins}) \mathbb{E}(X_\tau \mid \text{red wins})  + \mathbb{P}(\text{blue wins}) E(X_\tau \mid \text{blue wins})$ (due to the process converging to either a complete red or blue state), and noting that $\mathbb{E}(X_\tau \mid \text{red wins})=1$ and $\mathbb{E}(X_\tau \mid \text{blue wins})=0$, we have finally that $\mathbb{P}(\text{red wins}) = X_0$.
\end{proof}

The equivalence between the consensus probability formula of Theorem~\ref{thm:probability} and~\ref{thm:martingale} can be seen as follows. We have $\mu \sum_{t=0}^{\infty}P_t = \sum_{t=0}^{\infty}\sum_{v\in V}\mu (v)P_t(v)$. 
We can write the term $\mathbb{P}(R(v)\mid S_0)$ as $\sum_{t=0}^{\infty}\mathbb{P}(v\text{ selects red neighbour at round } t \text{ and was agnostic at } t)$, where we omit the conditioning on $S_0$. Let us denote the term $\mathbb{P}(v\text{ selects red neighbour at round } t \text{ and was agnostic at } t)$ by $Q_t(v)$. Our goal is to show that, given reversibility, $\sum_{t=0}^{\infty}\sum_{v\in V}\mu (v)P_t(v) = \sum_{t=0}^{\infty}\sum_{v\in V}\mu (v)Q_t(v)$. We will simply show that $\sum_{v\in V}\mu (v)P_t(v) = \sum_{v\in V}\mu (v)Q_t(v)$ for every time step $t$. Note that $P_t(v)$ can be written as $\sum_{u\in V}H(v,u)\mathbb{P}(v \text{ red at } t \text{ and } u \text{ agnostic at t})$, as vertex $v$ selects a neighbour $u$ with probability $H(v,u)$ independently. The same can be done with $Q_t(v) = \sum_{u\in V}H(v,u)\mathbb{P}(v \text{ agnostic at } t \text{ and } u \text{ red at t})$. Let's call the inner terms of both formulas $P_t(v,u)$ and $Q_t(v,u)$. It is clear that $P_t(v,u) = Q_t(u,v)$. Thus, we want to show that $\sum_{u,v\in V}\mu(v)H(v,u)P_t(v,u) = \sum_{u,v\in V} \mu(v)H(v,u)P_t(u,v)$. By using reversibility, we have that: $\mu(v)H(v,u) = \mu(u)H(u,v)$, and thus the right hand side can be written as: $\sum_{u,v\in V} \mu(v)H(v,u)P_t(v,u) = \sum_{u,v\in V} \mu(u)H(u,v)P_t(u,v)$. Now, it is a simple matter of observing the right hand side is the same as the left hand side with $u,v$ swapped in name. By summing over all $t$ we get the original formula given in the paper.

\subsubsection{Counterexamples for independence property of $P_t$ and reversibility requirement for martingale}
\label{subsubsec:examples}

One may wonder whether the property $H(v,w)\mu(v)=H(w,v)\mu (w)$ of reversibility is required for the martingale to hold, as this is not a requirement in the generalised voter model (Proposition \ref{prop:nicola's-result}). Here, we give a simple (counter-)example of a graph with 3 nodes such that the voter model on it is not a reversible Markov chain. 

\begin{example}[Counterexample for non-reversible chains]
    Consider the initial configuration $S_0 = s_0$ depicted in Figure \ref{fig:counterexample} in a graph with matrix $H$ given by:
    \begin{equation}
        H = \begin{bmatrix}
        \frac{1}{2} & \frac{1}{2} & 0 \\
        0 & \frac{1}{2} &\frac{1}{2} \\
        \frac{1}{2} & 0 & \frac{1}{2}
        \end{bmatrix}
    \end{equation}
    Solving $\mu H = \mu$, we get $\mu = (\frac{1}{3},\frac{1}{3},\frac{1}{3})$. Note that the chain is not reversible because, e.g., $H(v_1, v_2)\mu(v_1) = \frac{1}{6} \neq 0 = H(v_2, v_1)\mu(v_2)$. We can determine $\mathbb{P}(R(v_3)\mid S_0)$ by solving the equation putting together $S_0$ (Figure \ref{fig:counterexample}a) and the four possible states for $S_1$ (Figures \ref{fig:counterexample}b, \ref{fig:counterexample}c, \ref{fig:counterexample}d, \ref{fig:counterexample}e) also using that $\mathbb{P}(R(v_3)\mid S_0 = s_0) = \mathbb{P}(R(v_3)\mid S_1= s_{14})$, i.e, 
    \begin{equation}\label{eq:matingale}
        \mathbb{P}(R(v_3)\mid S_0 = s_0) = \frac{1}{4}\left(1 + 1 + 0 + \mathbb{P}(R(v_3)\mid S_0) \right)
    \end{equation}
    We get that $\mathbb{P}(R(v_3)\mid S_0 = s_0) = \frac{2}{3}$. From that, we get that $X_0 = \frac{5}{9}$ and that, on average, $X_1$ is given by $\mathbb{E}(X_1 \mid S_0) = \frac{1}{4}\left(\frac{1}{3} + \frac{2}{3} + 0 + \frac{5}{9} \right)= \frac{7}{18} \neq X_0$. Another way to see the martingale property does not hold is to evaluate the probability of red winning using a similar technique as in Equation \ref{eq:matingale} to get $\mathbb{P}(\text{red wins} \mid S_0) = \frac{1}{3} \neq X_0$.
\end{example}

\tikzset{red/.style={draw=red, circle, fill=red!30, inner sep=0.5}}
\tikzset{blue/.style={draw=blue, circle, fill=blue!30, inner sep=0.5}}
\tikzset{white/.style={draw=black, circle, fill=white, inner sep=0.5}}

\begin{figure}[t]
\centering
\vspace{-0.5cm}

\begin{minipage}[b]{1\linewidth}\centering
\begin{tikzpicture}
    \tikzset{red/.style={draw=red, circle, fill=red!30, inner sep=1.5}}
    \tikzset{blue/.style={draw=blue, circle, fill=blue!30, inner sep=1.5}}
    \tikzset{white/.style={draw=black, circle, fill=white, inner sep=1.5}}
    \node[red] (A) at (-0.8, 0) { $v_1$};
    \node[blue] (B) at (0, 1) { $v_2$};
    \node[white] (C) at (0.8, 0) {$v_3$};
    \path[->] (A) edge node[above left] {$\frac{1}{2}$} (B);
    \path[->] (B) edge node[above right] {$\frac{1}{2}$} (C);
    \path[->] (C) edge node[below] {$\frac{1}{2}$} (A);
    \path[->] (A) edge[loop left, min distance=3mm, in=160, out=200] node[left] {$\frac{1}{2}$} (A);
    \path[->] (B) edge[loop above, min distance=3mm, in=70, out=110] node[above] {$\frac{1}{2}$} (B);
    \path[->] (C) edge[loop right, min distance=3mm, in=340, out=380] node[right] {$\frac{1}{2}$} (C);
\end{tikzpicture}

\footnotesize (a)\; Initial configuration $S_0 = s_0$. 
We have $X_0 = \frac{5}{9}$, but $\mathbb{P}(\text{red wins} \mid S_0) = \frac{1}{3}$.
\end{minipage}

\vspace{0.5cm}

\setlength{\tabcolsep}{2pt}
\renewcommand{\arraystretch}{1.3}
\begin{tabular}{cccc}
\begin{minipage}[b]{.23\linewidth}\centering
\begin{tikzpicture}
    \tikzset{red/.style={draw=red, circle, fill=red!30, inner sep=0.5}}
    \tikzset{blue/.style={draw=blue, circle, fill=blue!30, inner sep=0.5}}
    \tikzset{white/.style={draw=black, circle, fill=white, inner sep=0.5}}
    \node[blue] (A) at (-0.35, 0) {\tiny $v_1$};
    \node[blue] (B) at (0, 0.5) {\tiny $v_2$};
    \node[red] (C) at (0.35, 0) {\tiny $v_3$};
    \path[->] (A) edge (B);
    \path[->] (B) edge (C);
    \path[->] (C) edge (A);
    \path[->] (A) edge[loop left, min distance=3mm] (A);
    \path[->] (B) edge[loop above, min distance=3mm] (B);
    \path[->] (C) edge[loop right, min distance=3mm] (C);
\end{tikzpicture}

\footnotesize (b)\; Config.\ $s_{11}$ with $X_1 = \frac{1}{3}$
\end{minipage}
&
\begin{minipage}[b]{.23\linewidth}\centering
\begin{tikzpicture}
    \tikzset{red/.style={draw=red, circle, fill=red!30, inner sep=0.5}}
    \tikzset{blue/.style={draw=blue, circle, fill=blue!30, inner sep=0.5}}
    \tikzset{white/.style={draw=black, circle, fill=white, inner sep=0.5}}
    \node[red] (A) at (-0.35, 0) {\tiny $v_1$};
    \node[blue] (B) at (0, 0.5) {\tiny $v_2$};
    \node[red] (C) at (0.35, 0) {\tiny $v_3$};
    \path[->] (A) edge (B);
    \path[->] (B) edge (C);
    \path[->] (C) edge (A);
    \path[->] (A) edge[loop left, min distance=3mm] (A);
    \path[->] (B) edge[loop above, min distance=3mm] (B);
    \path[->] (C) edge[loop right, min distance=3mm] (C);
\end{tikzpicture}

\footnotesize (c)\; Config.\ $s_{12}$ with $X_1 = \frac{2}{3}$
\end{minipage}
&
\begin{minipage}[b]{.23\linewidth}\centering
\begin{tikzpicture}
    \tikzset{red/.style={draw=red, circle, fill=red!30, inner sep=0.5}}
    \tikzset{blue/.style={draw=blue, circle, fill=blue!30, inner sep=0.5}}
    \tikzset{white/.style={draw=black, circle, fill=white, inner sep=0.5}}
    \node[blue] (A) at (-0.35, 0) {\tiny $v_1$};
    \node[blue] (B) at (0, 0.5) {\tiny $v_2$};
    \node[white] (C) at (0.35, 0) {\tiny $v_3$};
    \path[->] (A) edge (B);
    \path[->] (B) edge (C);
    \path[->] (C) edge (A);
    \path[->] (A) edge[loop left, min distance=3mm] (A);
    \path[->] (B) edge[loop above, min distance=3mm] (B);
    \path[->] (C) edge[loop right, min distance=3mm] (C);
\end{tikzpicture}

\footnotesize (d)\; Config.\ $s_{13}$ with $X_1 = 0$
\end{minipage}
&
\begin{minipage}[b]{.23\linewidth}\centering
\begin{tikzpicture}
    \tikzset{red/.style={draw=red, circle, fill=red!30, inner sep=0.5}}
    \tikzset{blue/.style={draw=blue, circle, fill=blue!30, inner sep=0.5}}
    \tikzset{white/.style={draw=black, circle, fill=white, inner sep=0.5}}
    \node[red] (A) at (-0.35, 0) {\tiny $v_1$};
    \node[blue] (B) at (0, 0.5) {\tiny $v_2$};
    \node[white] (C) at (0.35, 0) {\tiny $v_3$};
    \path[->] (A) edge (B);
    \path[->] (B) edge (C);
    \path[->] (C) edge (A);
    \path[->] (A) edge[loop left, min distance=3mm] (A);
    \path[->] (B) edge[loop above, min distance=3mm] (B);
    \path[->] (C) edge[loop right, min distance=3mm] (C);
\end{tikzpicture}

\footnotesize (e)\; Config.\ $s_{14}$ with $X_1 = \frac{5}{9}$
\end{minipage}
\end{tabular}

\caption{Counterexample for the conjecture that the Martingale property (Theorem~\ref{thm:martingale}) is valid for non-reversible chains. Note that edge weights were omitted from panels (b)–(e) for readability.}
\label{fig:counterexample}
\end{figure}

Observe that this counterexample also shows that the independence claimed by the authors of~\cite{manohara2025viral} does not hold and computing $P_t$ cannot be done by using said independence. In fact, observe that the quantity $\mathcal{S}_t^a(v_3) = \frac{1}{2^t}$ as the only way for $v_3$ to remain agnostic is for it to select itself $t$ times and each of those happen independently with probability $1/2$. However, if we compute $\mathcal{S}_2^a(v_3)$ using the formulas provided by~\cite{manohara2025viral}, we obtain: $\mathcal{S}_2^a(v_3) = 1/8 \neq 1/4$. We do not know currently whether there exist (non-trivial) graph families for which computing $P_t$ is tractable (due to e.g., independence holding in that case).

\subsection{Special Cases where Probability of Consensus can be Determined}

Calculating either $P_t$ from Theorem~\ref{thm:probability} or $\mathbb{P}(R(v)\mid S_t)$ from the Martingale Property (Theorem \ref{thm:martingale}), as well as consensus probabilities, appears to be hard in the general case (see Section~\ref{subsec:complexity} for a discussion). There are, however, simple scenarios in which the consensus probability is easy to calculate. We present three of such scenarios below.

The first scenario is rather straightforward and takes into account the initial state, and not the graph itself. It is therefore a result that cannot be guaranteed for a family of graphs, but instead is valid when the initial configuration is such that agnostic nodes only have gnostic neighbours. It is immediate to see, in this case, that only one round is enough for the agnostic nodes to be gnostic, which immediately implies $P_t = 0$ for all $t>0$. It is a simple matter to compute $P_0$.

\begin{proposition}[Solution for Isolated Agnostic Nodes]
    Let $G$ be a graph a $S_0$ an initial configuration. We say an agnostic node $u \in I$ is isolated if all nodes in its out-neighbourhood are gnostic at the initial state. Let $u_r = \sum_{v \,\, s.t. \,\, S_0(v) = \text{red}} H(u,v)$ be the sum of  edge weights from $u \in I$ to red nodes. Then, $P_0(u) = u_r$ and $P_t = 0$ for all $t>0$. As a consequence the consensus probality is:
    \begin{equation}\label{eq:1}
        p = \mu S_0 + \mu P_0
    \end{equation}
\end{proposition}

The first case above does not include when agnostic nodes have self loops (otherwise their out-neighbourhood would include agnostic nodes). 
The next special case is for complete graphs $G$, regardless of initial state and how many agnostic nodes. It does assume that the transition matrix is such that a vertex selects a neighbour uniformly.

\begin{proposition}[Solution for Complete Graph]
\label{corollary:complete_graph}
Let $G$ be a complete graph and consider the asynchronous (pull) voter model with agnostic states, with initial state $S_0$. Let $\gamma$ denote the proportion of red nodes among the gnostic ones. Then:
\begin{equation}\label{eq:1}
    \mathbb{P}(R(u)\mid S_0) = \gamma \text{ for all } u\in I
\end{equation}
As a consequence, 
\begin{equation}
    \mathbb{P}(\text{red wins}\mid S_0) = \gamma 
\end{equation}
\end{proposition}

\begin{proof}
    First, note that the complete graph is reversible, and thus we can apply Theorem~\ref{thm:martingale}. We want to compute $\mathbb{P}(R(u)\mid S_0)$ for every vertex $u\in I$. Let $T(u)$ denote the first time $u$ gets a colour. From the law of total probability, we can write 
    $$\mathbb{P}(R(u)\mid S_0) = \sum_{i=1}^{\infty} \mathbb{P}(T(u)=i)\mathbb{P}(R(u)\mid S_0, T(u)=i).$$ 
    It is easy to see that for nodes on $I$, irrespective of $S_0$, we have that $\mathbb{P}(R(u)\mid S_0, T(u)=1)$ equals the proportion of red nodes among the gnostic ones (which is $\gamma$). Thus, we would like to show that $\mathbb{P}(R(u)\mid S_0, T(u)=i)$ equals to  $\gamma$ and we will have concluded the result. For that, we will use induction.
    We need a strong induction hypothesis. Consider an alternative initial configuration $\tilde{S}_0$ with gnostic set that contains (or is equal to) the gnostic set in the original configuration ${S}_0$. We will show by induction that $\mathbb{P}(R(u)\mid \tilde{S}_0, T(u)=i)$ equals the proportion of red nodes in $\tilde{A}_0$. The base case, $i=1$ is easy as observed before. Suppose that for every $\tilde{S}_0$ we have the desired result for a certain $i$. We want to show it for $i+1$. We can now write the above by conditioning on all possible values $\tilde{S}_1$ can take: 
    \begin{multline}
        \mathbb{P}(R(u)\mid \tilde{S}_0, T(u)=i+1) = \sum_{\tilde{S}_1} \mathbb{P}(\tilde{S}_1 \mid \tilde{S}_0, T(u)=i+1)\mathbb{P}(R(u) \mid\tilde{S}_1, T(u) = i+1).
    \end{multline}
    For convenience, we will omit the conditionals on $\tilde{S}_0$ and $T(u)$ from the term $\mathbb{P}(\tilde{S}_1 \mid \tilde{S}_0, T(u)=i+1)$ writing simply $\mathbb{P}(\tilde{S}_1)$. We can write $\mathbb{P}(R(u)|\tilde{S}_1, T(u) = i+1) = \mathbb{P}(R(u)\mid \bar{S}_0=\tilde{S}_1, \tilde{T}(u) = i)$, where we use $\bar{S}_0$ to denote a Markov chain at time $0$ that starts at state $\tilde{S}_1$. By the induction hypothesis, we have that $\mathbb{P}(R(u)|\tilde{S}_1, T(u) = i+1)$ equals the proportion of red nodes in the gnostic set of $\tilde{S}_1$. 

    Let $r$ be the initial number of red nodes in $\tilde{S}_0$ and $b$ be the initial number of blue nodes in $\tilde{S}_0$. We would like to get 
    \begin{equation*}
        \sum_{\tilde{S}_1} \mathbb{P}(\tilde{S}_1)\mathbb{P}(R(u) \mid \tilde{S}_1, T(u) = i+1) = \frac{r}{b+r}.
    \end{equation*}
    Note that the term $\mathbb{P}(R(u) \mid\tilde{S}_1, T(u) = i+1)$ equals $\frac{|\tilde{S}_1=1|}{|\tilde{S}_1\in\{1,2\}|}$ by the induction hypothesis as argued above.  We are left to compute the term $\mathbb{P}(\tilde{S}_1)$ (which is conditioned on $\tilde{S}_0$ and $T(u)=i+1$). Observe that conditioning on $T(u)=i+1>1$ is equivalent to conditioning on vertex $u$ not becoming gnostic in round $1$. Thus, the term $\mathbb{P}(\tilde{S}_1)$ is the transition probability from $\tilde{S}_0$ to $\tilde{S}_1$ divided by the probability that vertex $u$ does not become gnostic in round $1$. First, the probability that vertex $u$ does not become gnostic in round $1$ is $1-\frac{b+r}{n(n-1)}$. Furthermore, note that we can combine multiple states $\tilde{S}_1$ based on the number of red and blue nodes they contain (as the term $\frac{|\tilde{S}_1=1|}{|\tilde{S}_1\in\{1,2\}|}$ depends only on that). It is thus, enough to compute the probability of transitioning into a state with a given number of red and blue nodes as that sums up the probabilities of all the relevant states $\tilde{S}_1$. 
    Using this fact, the states $\tilde{S}_1$ where the number of red and blue vertices does not change are not interesting as they will have exactly the desired target fraction. We are interested in the states where the ratio of red and blue vertices changes. Let $p_{r_1,b_1}$ be the probability that we have exactly $r_1$ red nodes and $b_1$ blue nodes in state $\tilde{S}_1$. 
    As in the asynchronous case, at most one vertex can change state, we only need to care about the terms: 
    $$p_{r+1,b}=\frac{(n-r-b-1)r}{n(n-1)\left(1-\frac{b+r}{n(n-1)}\right)}, p_{r,b+1}=\frac{(n-r-b-1)b}{n(n-1)\left(1-\frac{b+r}{n(n-1)}\right)}$$ 
    for the terms where one agnostic vertex different from $u$ becomes gnostic, and
    $$p_{r+1,b-1}=p_{r-1,b+1}=\frac{br}{n(n-1)\left(1-\frac{b+r}{n(n-1)}\right)}$$
    for the terms where one of the gnostic vertices changes color. 
    The remaining probability mass is on $p_{r,b}$. Thus, we would like to verify that 
    \begin{multline*}
        p_{r+1,b}\frac{r+1}{r+b+1}+p_{r,b+1}\frac{r}{r+b+1}+p_{r+1,b-1}\frac{r+1}{r+b}+p_{r-1,b+1}\frac{r-1}{r+b} 
    \end{multline*}
    is equal to $(1-p_{r,b})\frac{r}{r+b}$. 
    By plugging in the computed values for the $p_{r_1,b_1}$ terms we get:
    $$\frac{(n-r-b-1)r(r+b+1)}{(r+b+1)(n(n-1)-b-r)}+\frac{2br^2}{(r+b)(n(n-1)-b-r)}$$
    which is the same as:
    $$\frac{(n-r-b-1)r(r+b)+2br^2}{(r+b)(n(n-1)-b-r)} = \frac{(n-r-b-1)(r+b)+2br}{(n(n-1)-b-r)}\frac{r}{r+b}$$
    and it can be easily seen that the term multiplying $\frac{r}{r+b}$ equals $p_{r+1,b}+p_{r,b+1}+p_{r+1,b-1}+p_{r-1,b+1}$ which is exactly $(1-p_{r,b})$. Thus, by induction we have $\mathbb{P}(R(u)\mid S_0, T(u)=i)=\frac{r}{r+b}= \gamma$ for every positive integer $i$ (and any starting state $S_0$). As a result, $\mathbb{P}(R(u)\mid S_0) = \sum_{i=1}^{\infty} \mathbb{P}(T(u)=i)\mathbb{P}(R(u)\mid S_0, T(u)=i) = \gamma$.
    Plugging the values in the martingale of Theorem~\ref{thm:martingale}, it is easy to see that the probability of red consensus is also $\gamma$.
\end{proof}

Theorem~\ref{thm:log_agnostic} in Section~\ref{sec:time_bounds} also presents a special case, where we assume that the number of agnostic nodes is at most logarithmic, which can be solved efficiently. The argument of that result relies on a different approach, and we believe it is better to include it in Section~\ref{sec:time_bounds}.

\subsection{Approximating consensus probability with Markov chain Monte Carlo}
\label{subsec:MCMC}

While Theorems~\ref{thm:probability} and~\ref{thm:martingale} gives us a potential means of computing the exact probability of achieving consensus with a certain colour, a generalization of Proposition~\ref{prop:nicola's-result}, i.e., exact computation of $P_t$ or of  $\mathbb{P}(R(v)\mid S_t)$ (from Theorem~\ref{thm:martingale}) appears to be hard. 
With that in mind, we propose evaluating the probabilities of consensus of the voter model with agnostic nodes on general graphs by Algorithm~\ref{alg:MCMC}.

\begin{algorithm}
\caption{Estimating the Probability of Red Consensus}
\label{alg:MCMC}
\begin{algorithmic}[1]
    \State \textbf{Simulate} the process until all nodes are gnostic.
    \State \textbf{Apply} Proposition~\ref{prop:nicola's-result} once all nodes are gnostic to obtain a probability of red consensus $p$.
    \State \textbf{Repeat} the simulation multiple times and compute the average of the probabilities of red consensus $p$ from each run as an unbiased estimate of the true probability.
\end{algorithmic}
\end{algorithm}

By using known results from rumour spreading, Section~\ref{sec:time_bounds} shows that the simulation time required until all nodes are gnostic is at most $O(n^2\log(n))$ for general graphs. Even in the case of general adjacency matrices $H$, an extra $1/p$ factor in the general case is included, where $p$ is the smallest non-zero entry of $H$. As such, all (pull) voter models where the smallest non-zero entry is at most polynomially small, will have the property of polynomial simulation time until all nodes are gnostic. Moreover, the typical case is even faster: only $O(n\log(n))$ time is required for Erd\"os R\'enyi random graphs with high probability.
As usual, 'with high probability' means with probability going to $1$ as the graph size $n$ goes to infinity. The probability space is the product space of the random graph and the consensus process.
In particular, this implies that running the simulation once is not substantially slower than computing the terms $\mu(v)$ (at least on the worst case). In Section~\ref{subsec:complexity}, we provide a theoretical analysis of the number of runs required to achieve a given multiplicative error. In Section~\ref{sec:experiments}, we present experiments showing that only a few runs (typically less than a $200$) are required to obtain a good estimate (standard error below $0.01$) of the consensus probability. In particular, not only running the process until all nodes are gnostic is much faster than waiting until consensus is reached, the error of the estimate obtained from multiple runs is also much lower when we end the process at the point where all vertices are gnostic.

\section{Time Bounds and Computational Complexity}
\label{sec:time_bounds}

\subsection{Convergence Time Bounds for Consensus}

In addition to the question of what are the probabilities of achieving consensus with a particular colour, the community has also focused on determining the time it takes for consensus to be achieved on general graphs. Lemma~\ref{lemma:time_bounds} shows that, when agnostic vertices are present, expected consensus times are bounded by the expected time it takes for agnostic vertices to disappear plus whatever bounds one can prove for the classical voter model.

\begin{lemma}[Time bounds for consensus]
\label{lemma:time_bounds}
    Given a graph $G$ and a (pull) voter model on $G$ with agnostic states (either synchronous or asynchronous). Let $T_c$ denote the time it takes for consensus to be achieved and let $T_a$ denote the time it takes for the agnostic vertices to disappear. Let $S_0$ be the starting configuration. Let $f(n)$ be a bound on the expected time of consensus being reached on $G$ in the classical voter model (without any agnostic vertices) and any initial configuration. Then $\mathbb{E}(T_c|S_0)\leq \mathbb{E}(T_a|S_0)+f(n)$.
\end{lemma}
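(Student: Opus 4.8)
The plan is to decompose the consensus time into the two phases the lemma names: the phase during which agnostic vertices are still present, and the phase after they have all vanished. The key structural fact is that once no agnostic vertex remains, the update rules of Definition~\ref{def:agnostic} become vacuous---case (1) never triggers---so the voter model with agnostic nodes evolves from that point on exactly as the classical voter model on $G$. First I would record the elementary observation that $T_c \geq T_a$ always holds, since a consensus configuration is monochromatic in a gnostic colour and therefore contains no agnostic vertex; this makes the difference $T_c - T_a$ nonnegative and justifies writing $T_c = T_a + (T_c - T_a)$.

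Next I would argue that $T_a$ is a stopping time with respect to the natural filtration generated by $\{S_t\}$, as the event $\{T_a \leq t\}$ is determined by the configurations up to time $t$ (it is the first time the set of agnostic vertices becomes empty). Since the process is a time-homogeneous Markov chain on the configuration space in both the synchronous and asynchronous cases, the strong Markov property applies: conditioned on $S_{T_a}$, the evolution after time $T_a$ is a fresh copy of the chain started from $S_{T_a}$. Because $S_{T_a}$ has no agnostic vertices, this restarted chain is precisely a classical voter model with initial configuration $S_{T_a}$.

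The crucial step is then to invoke the uniform bound $f(n)$. By hypothesis, $f(n)$ bounds the expected consensus time of the classical voter model on $G$ for \emph{every} initial configuration. Hence, for each possible value of the random configuration $S_{T_a}$, we have $\mathbb{E}(T_c - T_a \mid S_{T_a}) \leq f(n)$, and this bound does not depend on which configuration $S_{T_a}$ turns out to be. Taking expectations and using the tower property gives $\mathbb{E}(T_c - T_a \mid S_0) = \mathbb{E}\bigl(\mathbb{E}(T_c - T_a \mid S_{T_a}) \bigm| S_0\bigr) \leq f(n)$, and combining this with the decomposition of $T_c$ yields the claimed inequality $\mathbb{E}(T_c \mid S_0) \leq \mathbb{E}(T_a \mid S_0) + f(n)$.

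I expect the only delicate point to be the rigorous application of the strong Markov property at the stopping time $T_a$ together with the verification that the post-$T_a$ dynamics genuinely coincide with the classical voter model; once that identification is made, the uniformity of $f(n)$ over all initial configurations does the remaining work, and no computation is required.
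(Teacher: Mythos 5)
Your proposal is correct and follows essentially the same route as the paper: decompose $T_c = T_a + T$, condition on the configuration $S_{T_a}$ reached when the last agnostic vertex disappears, observe that the post-$T_a$ dynamics coincide with the classical voter model, and apply the uniform bound $f(n)$ before averaging back over $S_{T_a}$ via the tower property (the paper phrases this as the law of total expectation over configurations $s$). The only difference is that you make explicit the stopping-time and strong Markov property justification that the paper leaves informal, which is a point in your favour rather than a divergence.
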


\begin{proof}
    In order for the process to reach consensus, it is necessary that all agnostic vertices become gnostic first. Thus, $T_c = T_a+T$, where $T$ is the time it takes for consensus to be reached once all agnostic vertices became gnostic. Taking expectations on both sides, and using linearity of expectation, we have that $\mathbb{E}(T_c|S_0) = \mathbb{E}(T_a|S_0)+\mathbb{E}(T|S_0)$. By using law of total expectation we have that $\mathbb{E}(T|S_0) = \sum_{s\in\{r,b\}^V} \mathbb{P}(S_{T_a} = s|S_0)\mathbb{E}(T|S_{T_a}=s, S_0)$ where $V$ is the vertex set of the graph, and $\{r,b\}$ represent the colours set (say, red and blue). Now, observe that once all agnostic vertices become gnostic, the process becomes equivalent to a standard consensus process with whatever configuration $S_{T_a}=s$ was reached by that point. That is, conditioned on $s$ we have by our assumption $\mathbb{E}(T|S_{T_a}=s)\leq f(n)$. Thus $\mathbb{E}(T|S_0) \leq \sum_{s\in\{r,b\}^V} \mathbb{P}(S_{T_a}=s|S_0)f(n) = f(n)$. Adding $\mathbb{E}(T_a|S_0)$ and using linearity of expectation, we get $\mathbb{E}(T_c|S_0) \leq \mathbb{E}(T_a|S_0)+f(n)$.
\end{proof}

Note that Lemma~\ref{lemma:time_bounds} is only useful for graphs $G$ in which consensus is guaranteed. Otherwise, $f(n)$ may be infinite, in which case, although true, the result does not give us any useful information.  As a result of Lemma~\ref{lemma:time_bounds}, all bounds which hold for the standard voter model will also hold for the case where agnostic vertices are present with an extra time for the agnostic vertices to disappear. In practice, the time for agnostic vertices to become gnostic is typically much shorter than the time it takes to reach consensus as a simple consequence of rumour spreading results. Indeed, it is easy to see that the process of agnostic vertices disappearing is analogous to that of standard rumour spreading. However, the typical results in the literature for the rumour spreading process deal only with the push model, or sometimes the push-pull model. We, on the other hand, want results for the pull model. This is only a minor inconvenience though, as the proof ideas from previous works can also be used for the pull model. Lemma~\ref{lemma:rumour_general}, alongside Corollary~\ref{corollary:rumour_general} shows that for general graphs, the expected time it takes in the pull model for the agnostic vertices to disappear is $O(n^2\log(n))$.   Observe that a star graph, with a gnostic node not on the center, as well as a path, would take an expected time of order $n^2$ for the agnostic vertices to become gnostic. Thus, Corollary~\ref{corollary:rumour_general} is almost tight. As we are also interested in the case of a general transition matrix $H$, Corollary~\ref{corollary:rumour_general_matrix} shows that given $H$, the expected time it takes in the pull model for the agnostic vertices to disappear is $O(n^2\log(n)/p)$ for $p$ defined as the minimum non-zero entry of $H$. In order to show Corollaries~\ref{corollary:rumour_general} and~\ref{corollary:rumour_general_matrix}, we show the following general Lemma which is used to conclude both results. This lemma is heavily based on known rumour spreading results and is included here for completeness.

\begin{lemma}
    Given a regular graph $G$ (potentially with loops), and a (pull) voter model with agnostic states (either synchronous or asynchronous), where the vertices choose their neighbours (including potentially itself) according to the entries of a given matrix $H$. Let $D_0$ denote the initial gnostic set of vertices and $D_i$ denote the set of vertices with distance exactly $i$ to $D_0$. Let $d$ denote the max distance between a vertex of the graph and the initial gnostic set $D_0$. Define $p_i$ as the minimum probability that a vertex in $D_i$ selects a neighbour from the subgraph $\cup_{j=0}^{i}D_j$. Let $T_a$ denote the number of rounds it takes for the agnostic vertices to disappear from the graph. Then we have that for every function $C(n)$, the probability $\mathbb{P}\left(T_a>\sum_{i=1}^d C(n)g(n)/p_i\right) \leq 2n^{-C(n)+2}$ and also that $\mathbb{E}[T_a] \leq 2g(n)\sum_{i=1}^d 1/p_i + 4$ where $g(n) = \log(n)$ for the synchronous case and $g(n) = n\log(n)$ for the asynchronous case.
    \label{lemma:rumour_general}
\end{lemma}
\begin{proof}
    Let $D_0$ denote the initial set of gnostic vertices. For every $i>0$, let $D_i$ be the set of vertices of distance exactly $i$ to the set $D_0$. Assuming the vertices from $\cup_{j=0}^{i}D_j$ have already become gnostic, given a factor $C(n)$, we bound the probability that it takes longer than $C(n)g(n)/p_{i+1}$ for all vertices in $D_{i+1}$ to become gnostic, where $p_{i+1}$ is defined as the minimum probability that  a vertex in $D_{i+1}$ picks a neighbour from the subgraph $\cup_{j=0}^{i}D_j$ and $g(n) = f(n)/n$ (ie, $g(n) = \log(n)$ in the synchronous case and $n\log(n)$ in the asynchronous case). If we let $T_v$ denote the time it takes for a vertex $v\in D_{i+1}$ to become gnostic, the probability that it takes at least $C(n)g(n)/p_{i+1}$ rounds for all vertices in $D_{i+1}$ to become gnostic can be written as $\mathbb{P}(\exists v\in D_{i+1},T_v\geq C(n)g(n)/p_{i+1})$.
    By union bound, this is at most $\sum_{v\in D_{i+1}}\mathbb{P}(T_v\geq C(n)g(n)/p_{i+1})$. Now, we bound $\mathbb{P}(T_v\geq C(n)g(n)/p_{i+1})$. Note that vertex $v$ has at least one gnostic neighbour as we assume all vertices from $D_{i}$ have turned gnostic already. In the asynchronous case, it also has a probability of $\frac{1}{n}$ of being picked. Thus, the probability that $v$ turns gnostic in a single step is at least $\frac{p_{i+1}}{h(n)}$, where $h(n) = 1$ in the synchronous case and $h(n)=n$ in the asynchronous case. Therefore, the probability that $v$ remains agnostic for $C(n)g(n)/p_{i+1}-1$ steps is at most $(1-\frac{p_{i+1}}{h(n)})^{C(n)g(n)/p_{i+1}-1} \leq 2n^{-C(n)}$, where we used that $(1-\frac{1}{x})^x\leq e$, that $g(n) = h(n)\log(n)$ and $(1-\frac{p_{i+1}}{h(n)})\geq \frac{1}{2}$. Thus, we have shown that, for all $C(n)$, $\mathbb{P}(T_v\geq C(n)g(n)/p_{i+1}) \leq 2n^{-C(n)}$. As a result, if $T_{i+1}$ denotes the time it takes for all vertices in $D_{i+1}$ to become gnostic, we have 
    \begin{equation*}
    \mathbb{P}(T_{i+1}\geq C(n)g(n)/p_{i+1})\leq \sum_{v\in D_{i+1}}\mathbb{P}(T_v\geq C(n)g(n)/p_{i+1})\leq 2n^{-C(n)+1}.
    \end{equation*}

    Now, we observe that we can bound $T_a$ by the time it takes for the agnostic vertices to disappear in the following auxiliary process: only when all the vertices in $D_{j}$ for $j\leq i$ have turned gnostic, can a vertex in $D_{i+1}$ turn gnostic. Let $T_{i}$ be the time it takes for all vertices in $D_{i}$ to turn gnostic after all vertices in $D_{i-1}$ have turned gnostic in this auxiliary process. We have that $T_a\leq \sum_{i=1}^d T_{i}$, where $d$ is the max distance between a vertex and the initially gnostic set $D_{0}$. The probability that there is a $T_i$ which takes at least $C(n)g(n)/p_i$ rounds is bounded by $2dn^{-C(n)+1} \leq 2n^{-C(n)+2}$ by union bound (where $d$ denotes the maximum distance in the graph for a vertex from $D_0$). 

We can then conclude that:
$$\mathbb{P}\Bigg(\sum_{i=1}^d T_i \geq \sum_{i=1}^dC(n)g(n)/p_i\Bigg) \leq \mathbb{P}(\exists i, T_i \geq C(n)g(n)/p_i)\leq 2n^{-C(n)+2}.$$ As a consequence, we have that:
$$\mathbb{P}(T_a\geq \sum_{i=1}^dC(n)g(n)/p_i\leq 2n^{-C(n)+2},$$ which proves the first part of the lemma. We now bound the expectation of $T_a$.
Let us denote the term $\sum_{i=1}^d 6g(n)/p_i$ by $A$.

If we set $C(n)=2$ and observe that: $\mathbb{E}[T_a] = \sum_{t}t\mathbb{P}(T_a=t) = \sum_{t\leq A}t\mathbb{P}(T_a=t)+\sum_{t>A}t\mathbb{P}(T_a=t)$. We have that 
    $$\sum_{t\leq A}t\mathbb{P}(T_a=t)\leq A\mathbb{P}(T_a\leq A)\leq A$$ and  
    $$\sum_{t>A}t\mathbb{P}(T_a=t)\leq \sum_{t>A}t\mathbb{P}(T_a\geq t).$$
    Observe now that if $t = Kg(n)\sum_{i=1}^d 1/p_i$ for some suitably chosen $K$, then the first part of the lemma implies that $\mathbb{P}(T_a\geq t) \leq 2n^{-K+2}$. Since $t>A$, we must have $K>2$. Thus, we have 
    $$\sum_{t>A}2n^{-c+2}\leq \int_{2}^{\infty} 2n^{-c+2}dc < 4.$$
This concludes the proof of the lemma.
\end{proof}

\begin{corollary}[rumour spreading bounds for general graphs]
    Given the conditions of Lemma~\ref{lemma:rumour_general}, suppose, additionally, that every vertex chooses their neighbours (including potentially itself) we equal probability. Then we have that, for every function $C(n)$, the probability $\mathbb{P}\left(T_a>C(n)ng(n)\right) \leq 2n^{-C(n)/3+2}$ and also that $\mathbb{E}[T_a] \leq 6ng(n) + 4$ where $g(n) = \log(n)$ for the synchronous case and $g(n) = n\log(n)$ for the asynchronous case.
    \label{corollary:rumour_general}
\end{corollary}

\begin{proof}
    Lemma~\ref{lemma:rumour_general} gives us that: $\mathbb{P}\left(T_a>\sum_{i=1}^d C(n)g(n)/p_i\right) \leq 2n^{-C(n)+2}$ and that $\mathbb{E}[T_a] \leq 2g(n)\sum_{i=1}^d 1/p_i + 4$. Thus, we just have to show that: $\sum_{i=1}^d 1/p_i \leq 3n$. Simply observe that: $p_i \geq \frac{1}{d_i}$, since a vertex in $D_i$ has at most $|D_i|+|D_{i+1}|$ neighbours in the subgraph $\cup_{j=i}^{\infty}D_j$ (as its neighbours must be either in $D_i$ or $D_{i+1}$). By definition of $d_i$, and by the fact that the transition probability is uniform, we have that $p_i \geq \frac{1}{d_i}$. Thus, we have that $\sum_{i=1}^d 1/p_i \leq \sum_{i=1}^d d_i \leq \sum_{i=1}^d (1+|D_i|+|D_{i+1}|) \leq 3n$.
\end{proof}

\begin{corollary}
    Given the conditions of lemma~\ref{lemma:rumour_general}, suppose additionally that $\min_{u,v\in V} \frac{1}{H(u,v)} = p$. Then we have that, for every function $C(n)$, the probability $\mathbb{P}\left(T_a>C(n)ng(n)/p\right) \leq 2n^{-C(n)+2}$ and also that $\mathbb{E}[T_a] \leq 2ng(n)/p + 4$ where $g(n) = \log(n)$ for the synchronous case and $g(n) = n\log(n)$ for the asynchronous case.
    \label{corollary:rumour_general_matrix}
\end{corollary}

\begin{proof}
    Simply observe that  if we define $p$ as the minimum probability of selecting a neighbour, we have $p_i\geq p$ for all $i$. Thus, $\sum_{i=1}^{d} 1/p_{i} \leq n/p$ and we can simply use Lemma~\ref{lemma:rumour_general} to conclude the result.
\end{proof}

Constructing a matrix $H$ that requires order $n^2/p$ rounds for agnostic vertices to disappear is rather simple. Consider a path of depth n and suppose only the left endpoint is coloured. Suppose every vertex picks from its left neighbour only with probability p and that the rest of the mass is a loop to itself (the coloured vertex has a probability 1 of picking from itself as it has no left neighbour). The probability the next vertex is coloured is $h(n)/p$ where $h(n)$ is $1$ in the synchronous case and $n$ in the asynchronous case. As this event needs to happen $n$ times we need at least $nh(n)/p$ which is within a factor $\log(n)$ of the bound given by Corollary~\ref{corollary:rumour_general_matrix}.

While the previous lemma and corollaries give bounds for general graphs, it is likely that for a typical graph, the rumour spreading process is significantly faster. Indeed, it has been shown by~\cite{fountoulakis2010reliable, panagiotou2017asynchronous} that in the case of random graphs, the time it takes for agnostic vertices to disappear is of order $n\log(n)$. As their results are for the push model, we state Proposition~\ref{prop:rumour_random} which is for the pull model. In this case, we defer the proof to the appendix as we are just reusing previously known ideas for the push model of the rumour spreading process in the pull case.

\begin{proposition}[rumour spreading bounds for random graphs]
    Let $p>> \log(n)/n$ and $G\sim G(n,p)$ be an Erd\"os R\'enyi random graph. Consider a (pull) voter model on $G$ with agnostic states (either synchronous or asynchronous). Let $T_a$ denote the number of rounds it takes for the agnostic vertices to disappear. Then, with high probability we have that $\mathbb{E}[T_a] = O(f(n)\log(n))$ where $f(n) = 1$ in the synchronous case and $f(n)=n$ in the asynchronous case.
    \label{prop:rumour_random}
\end{proposition}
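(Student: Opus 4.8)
The plan is to condition on a ``typical'' realisation of $G\sim G(n,p)$ and then analyse the pull process on that fixed graph, mirroring the phase-based arguments developed for push rumour spreading on random graphs in \cite{fountoulakis2010reliable, panagiotou2017asynchronous}. Since $p \gg \log(n)/n$, standard Chernoff bounds give that, with high probability over $G$, every vertex has degree $(1\pm o(1))np$, the graph is connected, and (via the expander-mixing lemma, using the $O(\sqrt{np})$ spectral gap of $G(n,p)$) the number of edges between any set $S$ and its complement is $(1\pm o(1))p|S|(n-|S|)$. All subsequent claims are made on such a good graph; this is where the ``with high probability'' of the statement lives, while $\mathbb{E}[T_a]$ is the expectation over the process randomness on a fixed good $G$. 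Near-regularity is exactly what makes push and pull behave similarly here, so the push analysis transfers with cosmetic changes.

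Treating the synchronous case first, let $k_t$ be the number of gnostic vertices and $u_t=n-k_t$. An agnostic vertex $v$ of degree $d_v\approx np$ becomes gnostic with probability equal to the fraction of its neighbours that are gnostic, so summing over agnostic vertices and using near-regularity together with the edge-expansion estimate gives $\mathbb{E}[k_{t+1}\mid k_t]\approx k_t + \tfrac{1}{np}\,e(\text{agnostic},\text{gnostic}) \approx k_t + \tfrac{k_t(n-k_t)}{n}$. This produces two regimes: a growth phase in which $k_t=o(n)$ and the gnostic set multiplies by a factor bounded away from $1$ each round, and a finishing phase in which $\mathbb{E}[u_{t+1}\mid u_t]\approx u_t^2/n$, so the agnostic count contracts at least geometrically (in fact doubly-exponentially once $u_t<n/2$). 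Whenever the relevant conditional mean is $\omega(1)$, Chernoff/Azuma concentration on the number of newly informed vertices upgrades each expectation statement to a high-probability one, so each regime is traversed in $O(\log n)$ rounds.

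The two endpoints require separate care, and the early one is where I expect the main difficulty. At the very start of the growth phase, when $k_t$ is only constant or polylogarithmic, the expected number of new gnostic vertices is $\Theta(k_t)$ but not large enough for concentration to force growth in every single round (in pull there is a constant per-round stalling probability when $k_t=\Theta(1)$); here I would argue, as in the push analysis, that over a window of $O(\log n)$ rounds the gnostic set reaches polylogarithmic size with high probability, after which concentration takes over. At the other endpoint, once $u_t$ is small enough (say $u_t=o(\sqrt{np})$), each surviving agnostic vertex has $\approx np$ gnostic neighbours and so fails to pull one with probability $O(u_t/(np))$; a union bound over the $u_t$ remaining vertices shows they all turn gnostic in one further round with high probability. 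Combining the three pieces yields $T_a=O(\log n)$ with high probability in the synchronous model.

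To pass from this high-probability event to the claimed bound on $\mathbb{E}[T_a]$, I would use monotonicity: under the natural coupling (shared neighbour choices), enlarging the gnostic set can only speed the process, so the completion time from any nonempty configuration is stochastically dominated by the completion time from a single gnostic vertex, which is $O(\log n)$ w.h.p. Restarting this domination every $C\log n$ rounds gives a geometric tail $\mathbb{P}(T_a > jC\log n)\le 2^{-j}$, whence $\mathbb{E}[T_a]=\sum_{t\ge 0}\mathbb{P}(T_a>t)=O(\log n)$. Finally, the asynchronous bound follows by the standard time-rescaling between the two activation models: activating one vertex per step, a block of $\Theta(n)$ asynchronous steps simulates one synchronous round (each vertex is activated a constant number of times in expectation, formalised by a coupling or Poissonisation argument), so the bound is multiplied by $n$ to give $\mathbb{E}[T_a]=O(n\log n)$.
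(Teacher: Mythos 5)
Your synchronous analysis follows essentially the same route as the paper's proof, which also conditions on a typical realisation of $G$ and runs a phase argument in the style of \cite{fountoulakis2010reliable} (slow start, exponential growth, finishing phase). One substitution you make is a genuine simplification: since in the pull model the number of newly gnostic vertices in a round is a sum of \emph{independent} indicators (each agnostic vertex pulls on its own), multiplicative Chernoff bounds apply directly, whereas the paper, importing the push-model machinery, invokes Talagrand's inequality for the regime where the gnostic set is between polylogarithmic size and $\varepsilon n$ (it correctly notes Azuma--Hoeffding is useless there, but Chernoff is available and you exploit that). Your restart/stochastic-domination argument converting the w.h.p.\ bound on $T_a$ into a bound on $\mathbb{E}[T_a]$ is also a welcome addition: the paper's synchronous proof stops at ``$T_a = O(\log n)$ w.h.p.'' and leaves that step implicit.

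There are, however, two concrete gaps. First, the expander-mixing lemma does not give $e(S,[n]\setminus S) = (1\pm o(1))\,p\,|S|(n-|S|)$ for \emph{small} sets: with $\lambda = O(\sqrt{np})$ the relative error is $O\bigl(\lambda/(p\sqrt{|S|(n-|S|)})\bigr) = O\bigl(1/\sqrt{p|S|}\bigr)$, which blows up precisely in the growth phase where $|S_t|$ is constant or polylogarithmic and $p$ may be as small as $\mathrm{polylog}(n)/n$. You need the uniform estimate proved by Chernoff plus a union bound over all sets (Lemma~\ref{lemma:kosta}, from \cite{panagiotou2017asynchronous}), which works because $|S|(n-|S|)p \gg |S|\log n$ when $np \gg \log n$. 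The property you use downstream is true, but the tool you cite does not establish it in the regime where you need it most.

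Second, and more seriously, the asynchronous reduction is flawed as stated. A block of $\Theta(n)$ asynchronous steps does not simulate a synchronous round under any coupling: each vertex is missed with probability roughly $e^{-1}$ per block, so the blocked process is at best a ``lazy'' synchronous round in which each vertex participates independently with constant probability. There is no stochastic domination letting you multiply the synchronous bound by $n$ as a black box; you would have to re-run the entire phase analysis for the lazy process (doable, but real work, not cosmetic), while taking $\Theta(n\log n)$-step blocks to activate every vertex w.h.p.\ overshoots to $O(n\log^2 n)$. The paper avoids this entirely with a short direct argument: conditioned on the gnostic set $S$ with $|S|=i$, the waiting time for the next vertex to become gnostic is geometric with success probability $\frac{1}{n}\sum_{v\notin S} d_S(v)/d(v) = (1\pm o(1))\,i(n-i)/n^2$ on a good graph, so by linearity $\mathbb{E}[T_a] = \sum_{i=1}^{n-1}(1+o(1))\frac{n^2}{i(n-i)} \leq (2+o(1))\,n\log n$. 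I recommend adopting that route for the asynchronous case.
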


Lastly, as a consequence of the above results, one can easily see that a single run of our Markov Chain Monte Carlo algorithm is relatively fast, as it is the time to simulate the process until no agnostic vertex remains plus the computation of the influences $\mu(v)$ (which only needs to be done once for the graph). Apart from the number of runs necessary to obtain a good estimate (which is theoretically analyzed in Section~\ref{subsec:complexity} and empirically in Section~\ref{sec:experiments}), the results from this section show that a single run of Algorithm~\ref{alg:MCMC} takes time $O(n^2\log(n))$ in the worst case and $O(n\log(n))$ in the typical case which means MCMC is an efficient way of estimating the probabilities of consensus.

\subsection{Computational complexity of Algorithm~\ref{alg:MCMC} and related problems}
\label{subsec:complexity}

The previous section demonstrated each run of Algorithm~\ref{alg:MCMC} takes polynomially many steps to complete for every matrix $H$ with $\min_{u,v\in V} \frac{1}{H(u,v)}$ being polynomial in the graph size. We now exhibit a simple proof that Algorithm~\ref{alg:MCMC} provides a fully-polynomial time randomized approximation scheme (FPRAS) for the problem of computing the consensus probability, with a multiplicative error rate.

\begin{theorem}
    For every given $\varepsilon$, Algorithm~\ref{alg:MCMC} for estimating the consensus probability $c$ provides an estimate $\bar{c}$ which is, with high probability, in between $(1\pm \varepsilon)c$ by performing at most $k=\frac{2\omega(n)}{\varepsilon^2c^2}$ repetitions and polynomial total runtime, where $\omega(n)$ is an arbitrarily slow growing function of the graph size $n$.
\end{theorem}

\begin{proof}   
   Let us index each run of the Algorithm~\ref{alg:MCMC} by $i$ from $1$ to $k$. Define $X_i$ to be the estimated consensus probability for run $i$. We have: $\mathbb{E}[X_i] = c$, where $c$ denotes the consensus probability. Note that the $X_i$ are independent identically distributed random variables and that $X = \sum_{i=1}^k X_i$ satisfies the condition of Theorem~\ref{thm:azuma} with $c_i=1$ for all $i$ (as $X_i\in [0,1]$). We have that $\mathbb{E}[X] = kc$ and by Theorem~\ref{thm:azuma}, for every given $\varepsilon>0$, the following holds: $\mathbb{P}(\left|X-\mathbb{E}[X]\right|\geq \varepsilon kc)\leq 2\exp\left(-\frac{\varepsilon^2k^2c^2}{2k}\right) \leq 2\exp\left(-\frac{\varepsilon^2kc^2}{2}\right)$. Thus, if we set $k = \frac{2\omega(n)}{\varepsilon^2c^2}$, the probability that the MCMC estimate after $k$ steps has a multiplicative error larger than $1\pm\varepsilon$ is bounded by $e^{-\omega(n)}$, and it is enough to select an arbitrarily slow growing function $\omega(n)$ to conclude the result. By Corollary~\ref{corollary:rumour_general_matrix}, each run takes more than $C(n)g(n)/p$ steps with probability less than $2n^{-C(n)+2}$, by a union bound the probability that the $k$ runs take longer than $kC(n)g(n)/p$ is upper-bounded by $2kn^{-C(n)+2}$, and given $k$ we can select a polynomial $C(n)$ so that this probability is $o(1)$. We just need to verify if $k$ is a polynomial in $n$. It is a simple matter of verifying that $1/c$ is a upper-bounded by a polynomial in $n$. This can be demonstrated by using Theorem~\ref{thm:probability} to conclude that $c$ is at least the minimum smallest entry of the eigenvector $\mu$ of $H$. Now, since $\mu H = \mu$, and the smallest entry of $H$ is $p$, we must have that the smallest entry of $\mu$ is at least $p$ (since $\sum_{v}\mu(v)H(v,w) = \mu(w)$ implies $p = \sum_{v}\mu(v)p \leq \mu(w)$ for every $w$).
   
\end{proof}

Naturally, the reason why we propose an FPRAS approximation for the problem of computing the consensus probability is that we suspect exact computation of said probability is hard. While analyzing complexity of the exact computation of the consensus probability is probably very challenging, previous works have studied related problems and shown their NP-hardness~\citep{zehmakan2024viral, manohara2025viral}. The typical formulation is to study the following problem variant, termed adoption maximization in~\cite{manohara2025viral}.

\begin{definition}
    Consider a voter model with agnostic states on a graph $G$ with transition matrix $H$ and initial state $S$. For a set of vertices $A$, consider the voter model with agnostic states on $G$ with the same matrix $H$, but with initial state $S'$, where $S'(v) = S(v)$ for $v\notin A$ and $S'(v)$ is red for all $v\in A$, and denote by $F_\tau(A)$ the expected number of red vertices after $\tau$ rounds. For a given voter model, a time $\tau$ and a budget $k$, the adoption maximization (AM) problem is to find: 
    $argmax_{A\subset V,|A|\leq k} F_\tau(A)$.
\end{definition}

The original authors~\cite{manohara2025viral} correctly show that the Maximum Coverage problem has a polynomial time reduction to the AM problem, thus implying that a better than $1-1/e$ approximation of the AM problem can only be obtained if $NP\subseteq DTIME(n^{O(\log\log n)})$. However, there is a small mistake in their paper when presenting a deterministic (greedy) approximation of the AM problem which achieves the $1-1/e$ approximation ratio. As argued in Section~\ref{subsubsec:examples}, exact computation of the $P_t$ quantities appears to be hard. Just like observed by the authors of~\cite{zehmakan2024viral}~\footnote{In their work, the gnostic vertices could not change their state}, exact computation of the expected number of red vertices of the voter model after $t$ rounds is likely a hard problem. As such, it is likely not easy to provide a deterministic approximation of the AM problem, as the approach from~\cite{manohara2025viral} relied on computing the expected number of red vertices of the voter model after $t$ rounds. Following~\cite{zehmakan2024viral}, we fix the error in the AM greedy approximation by showing that the greedy algorithm can be approximated by a Monte Carlo (randomized) approach, which produces an $1-1/e-\varepsilon$ approximation ratio for the AM problem. Algorithm~\ref{alg:greedy} presents the proposed greedy algorithm, and follows the standard blueprint for such approximations in the context of the Hill Climbing problem~\cite{chen2013information}.

\begin{algorithm}
\caption{Monte Carlo Greedy Algorithm}
\label{alg:greedy}
\begin{algorithmic}[1]
    \State \textbf{Input:} Graph $G$, initial state $S$, budget $k$, time $\tau$, error factor $\varepsilon > 0$
    \State \textbf{Output:} Selected seed nodes $A$
    
    \State Initialize $A \gets \emptyset$
    \For{$i = 1$ to $k$}
        \State $v \gets \arg\max_{w \in V \setminus A} \text{Est}_\tau(A \cup \{w\})$
        \State $A \gets A \cup \{v\}$
    \EndFor
    \State \textbf{return} $A$

    \Statex

    \Function{Est\_{$\tau$}}{$A$}
        \State $count \gets 0$
        \For{$j = 1$ to $m$}
            \State Simulate $\tau$ rounds of the voter model with initial state obtained from $S$ by making $A$ red
            \State $r \gets$ number of red nodes after $\tau$ rounds
            \State $count \gets count + r$
        \EndFor
        \State \Return $count/m$
    \EndFunction
\end{algorithmic}
\end{algorithm}

The proof that Algorithm~\ref{alg:greedy} approximation is a $1-1/e-\varepsilon$ follows the same method as the one from~\cite{zehmakan2024viral}.

\begin{theorem}
    Algorithm~\ref{alg:greedy} achieves, with high probability, an approximation ratio of $1-1/e-\varepsilon$ in time $O(\frac{9k^2\omega(n)n\tau}{\varepsilon^2p})$, where $p =\min_{u,v} \frac{1}{H(u,v)}$ and $\omega(n)$ is an arbitrary slow growing function. This is polynomial in the graph size if $1/p$ is polynomial.
\end{theorem}

\begin{proof}    
    It is a known classical result (e.g. see Theorem 3.6 of~\cite{chen2013information}) that if function $Est_\tau(A)$ provides an $\eta$-multiplicative error estimate of $F_\tau(A)$ for $\eta = \varepsilon/(3k)$, then Algorithm~\ref{alg:greedy} provides an $1-1/e-\varepsilon$ approximation of the original problem.

    Consider an arbitrary set $A$ and let us study the function $Est_\tau(A)$. Let $X_j$ denote the number of red vertices in the $j$-th iteration of the $Est_\tau$ function divided by $n$. Define $X=\sum_{j=1}^{m} X_j$. It is easy to see that $X$ satisfies the conditions of Theorem~\ref{thm:azuma} with $c_j = 1$, as $X_j \in [0,1]$ are independent (and identical). It is also clear that $\mathbb{E}[X] = mF_\tau(A)$, so by using Theorem~\ref{thm:azuma}, we get that: $\mathbb{P}(|X-\mathbb{E}[X]| \geq \eta\mathbb{E}[X])\leq 2\exp\left(-\frac{\eta^2mF_\tau(A)}{2}\right)$. We simply need to select $m=\frac{2\omega(n)}{\eta^2F_\tau(A)}$ for $\eta = \varepsilon/(3k)$ and some slow growing function $\omega(n)$ and the Monte Carlo greedy approximation is an $\eta$-multiplicative error with probability $1-\exp(-\omega(n))$. Each iteration of $Est_\tau$ needs time at most $O(n\tau)$, and thus $m$ iterations take time $O(mn\tau)$. Since $m=\frac{2\omega(n)}{\eta^2F_\tau(A)}$ and $\eta = \varepsilon/(3k)$, we just have to show that $F_\tau(A)\geq p$ and the time bound given in the statement holds. Note that we have shown in Theorem~\ref{thm:probability} that $\mu \mathcal{S}_{\tau}^r = \mu \mathcal{S}_0^r + \sum_{t=0}^r\mu P_t$ and thus $F_\tau(A) = \mathbb{1}S_\tau^r\geq \mu \mathcal{S}_\tau^r \geq \mu \mathcal{S}_0^r$ and again we use the fact that the minimum entry of $\mu$ is at least the minimum entry of matrix $H$. Thus, $F_\tau(A) \geq p$ and the result is concluded.
    \end{proof}

While the above discussion is interesting and provides related NP-hard problems, we note that it does not have much bearing on the computational complexity of the consensus problem. This is because the consensus probability computation admits a fully-polynomial time randomized approximation, while the AM problem can only be approximated up to $1-1/e$ error in polynomial time. 

We provide one last result, presenting a special case where the consensus problem with agnostic vertices is polynomial. This special case is based on the following lemma:

\begin{lemma}[Linearity Property]
    \label{lemma:linearity_property}
    Consider a voter model with agnostic states on a graph $G$, with transition matrix $H$, and assume the process converges for every starting configuration. Let us denote by $A$ the initial agnostic set, and by $R$ the initial red set. The red consensus probability (for fixed $H$ matrix) depends solely on $R$ and $A$ and may be denoted by $p(R,A)$. Then, the following linearity property holds: 
    $$p(R,A) = \sum_{v\in R} p(\{v\},A),$$
    where $p(v,A)$ denotes the red consensus probability with initial state where $A$ is the agnostic set and the only red vertex is $v$.
\end{lemma}

\begin{proof}
    This can be shown by the following simple argument. When colouring the vertices of the graph, additionally attribute to them their index (that is, we use multiple colours). Red consensus means only the red vertices are left and happens with probability $p(R,A)$ (though there may be multiple vertex indices). We allow the process to continue until all vertex indices collapse into a single one. This is guaranteed as the process converging with any starting configuration with $2$ colours, guarantees it converges for any number of colours. This is a simple induction on the number of colours: simply pick one colour to be red and rename all other colours blue. We either get a red consensus, and thus, the original colour it represents is the winner, or we have a blue consensus, and the state of the original process now has one colour less and induction may be applied. Since all vertex indices eventually become a single one, it is immediate that this process is equivalent to colouring this specific final vertex index with red and all non-agnostic vertices blue. Thus, the event where we have eventual red consensus with initial red set $R$ and agnostic set $A$ is the union of events where we have eventual red consensus with initial red set $\{v\}$ and agnostic set $A$, for all $v\in R$. This concludes the proof.
\end{proof}

The above lemma implies a simple linearity property of the consensus probability. If no agnostic vertices are present, we only need $n$ variables to represent the consensus probability problem, where $n$ is the size of the graph. It is a simple matter to write a set of equations that associates these variables to one another in a system of equations by looking one step ahead of the initial state. This simple idea can be extended to build an algorithm to compute the consensus probability which runs in polynomial time if the number of agnostic vertices present is $O(\log n)$.

\begin{theorem}
    \label{thm:log_agnostic}
    Consider a voter model with agnostic states on a graph $G$, with transition matrix $H$, and assume the process converges for every starting configuration. Let $k$ denote the number of agnostic vertices in the starting configuration. Then, the consensus probability may be computed in time $O(n^32^{3k})$ and additional $O(n^22^{2k})$ memory, by solving a system of linear equations on $O(n2^{k})$ variables.
\end{theorem}

\begin{proof}
    The idea is to construct a system of equations associating the variables to compute the consensus probability. Namely, let $R$ be the initial red set, $A$ be the initial agnostic set and and assume $|A| = k$. We want to compute the consensus probability with the given initial state, denoted by $p(R,A)$. By Lemma~\ref{lemma:linearity_property}, it is enough to compute $p(\{v\}, A)$ for every $v\in R$. Thus, we build an algorithm to compute $p(\{v\}, B)$ for every $B\subseteq A$ and $v\notin B$. Observe that there is a polynomial number of such variables and the number of variables is thus bounded by $n\times 2^{k}$.
    
    We now explain how to compute the variables $p(\{v\}, B)$: with a starting configuration $S_0$ with $v$ as the only red vertex and $B$ as the agnostic set. Consider one step of the algorithm, which reaches state $S_1$, where the red set is $R_1$ and the agnostic set is $B_1\subseteq B$. Since knowledge of $R_1$ and $B_1$ determines $S_1$, we will denote the state $S_1$ by $(R_1,B_1)$. It is simple to build an equation associating $p(\{v\}, B)$ and all possible $(R_1,B_1)$ choices. Namely, $p(\{v\}, B) = \sum_{B_1}\sum_{R_1} \mathbb{P}((R_1,B_1))|S_0)p(R_1,B_1)$. By Lemma~\ref{lemma:linearity_property}, this is the same as: $\sum_{B_1}\sum_{R_1} \mathbb{P}((R_1,B_1)|S_0)p(R_1,B_1) = \sum_{B_1}\sum_{R_1} \mathbb{P}((R_1,B_1)|S_0)\sum_{w\in R_1}p(\{w\},B_1)$. For every fixed $B_1$, we will show that $\sum_{R_1} \mathbb{P}((R_1,B_1)|S_0)\sum_{w\in R_1}p(\{w\},B_1) = \sum_{w\notin B_1} \mathbb{P}(w\text{ red, } B_1 \text{ agnostic set}|S_0)p(\{w\},B_1)$. We can simply swap the summands on the left side and write: $\sum_{R_1} \mathbb{P}((R_1,B_1)|S_0)\sum_{w\in R_1}p(\{w\},B_1) = \sum_{w\notin B_1}p(\{w\},B_1)\sum_{R_1, w\in R_1}\mathbb{P}((R_1,B_1)|S_0)$, since the first sum fixes $R_1$ such that $R_1\cap B_1 = \emptyset$ and then goes over $w\in R_1$, so inverting the summands will go over $w\notin B_1$ and consider all possible choices of $R_1$ which contain that $w$. Now, simply note that $\sum_{R_1, w\in R_1}\mathbb{P}((R_1,B_1)|S_0) = \mathbb{P}(w\text{ red, } B_1 \text{ agnostic set}|S_0)$, since the sum is simply conditioning on all possible choices for the red set in $S_1$ which have $w$ red. As a result, we have that:  $p(\{v\}, B) = \sum_{B_1,w\notin B_1}p(\{w\},B_1)\mathbb{P}(w \text{ red, } B_1 \text{ agnostic set}|S_0)$.
    
    Assuming we are in the synchronous case, every vertex picks an independent neighbour, the factors $\mathbb{P}(w \text{ red, } B_1 \text{ agnostic}|S_0)$ may be computed as follows.  The independence implies that $\mathbb{P}(w \text{ red, } B_1 \text{ agnostic}|S_0) = \mathbb{P}(w \text{ red})\prod_{b\in B_1}\mathbb{P}(b \text{ agnostic})\prod_{b\in B\setminus B_1}\left(1-\mathbb{P}(b \text{ agnostic})\right)$. Observe now that we just need to compute and store the following (for every possible $v$ and $B$): for every $w$, we need to compute the probability that $w$ is red after the first round; for every $b\in B$, we need to compute the probability that $b$ is agnostic after the first round. Since this is just one round these factors may be computed as follows: for every vertex $w\neq v$, the probability that $w$ became red is $H(w,v)$. This can be computed in constant time. For vertex $v$, the probability that it remains red is $\sum_{w\in B}H(v,w)$, as the only way is if it selects an agnostic neighbour. This can be computed in time $O(k)$. Every agnostic vertex $w$ has probability $\sum_{b\in B}H(w,b)$ of remaining agnostic, which may also be computed in time $O(k)$. As there are $O(k)$ factors in the product, the term $\mathbb{P}(w \text{ red, } B_1 \text{ agnostic}|S_0)$ can be computed in $O(k^2)$.  

    Assuming we are in the asynchronous case, an uniform vertex will be selected for changing colour and computing the factors $\mathbb{P}(w \text{ red, } B_1 \text{ agnostic}|S_0)$ is similarly simple. We now have to condition on the selected vertex $z$: $\mathbb{P}(w \text{ red, } B_1 \text{ agnostic}|S_0) = \sum_{z}\frac{1}{n}\mathbb{P}(w \text{ red, } B_1 \text{ agnostic}|S_0,z)$. Note that we assume that $S_0$ has $v$ as the only red vertex and $B$ as the agnostic set so many of the $w$ and $B_1$ choices have to be zero. If we assume $w\neq v$, then it must be the case that $z=w$ which becomes red and $B_1$ can be either $B$ (if $w$ was blue) or $B_1\cup\{w\} = B$ (if $w$ was agnostic). For either case, the desired probability is exactly $\frac{1}{n}H(w,v)$ which can be computed in constant time. If we assume $w=v$, then we must have $B_1 = B$ (if the selected vertex $z$ is gnostic or if it was agnostic and remained agnostic), or we must have $B_1\cup\{z\}=B$ (and the selected vertex $z$ was agnostic and became gnostic). For the first case, the probability is $\sum_{z\notin B}\frac{1}{n} + \sum_{z\in B}\frac{1}{n}\sum_{z_2\in B}H(z,z_2)$, which can be computed in time $O(k^2)$. For the second case, the probability is $\sum_{z\in B}\frac{1}{n}\sum_{z_2\notin B}H(z,z_2)$, which can also be computed in time $O(k^2)$.
    
    Since we have at most $n2^{k}$ coefficients and each can be computed in $O(k^2)$ time, every equation can be built in time $O(k^2n2^{k})$. As there are $O(n2^k)$ equations associating every $p(\{v\}, B)$ for all $B$ and all $v\notin B$, the total memory used is $O(n^22^{2k})$ and runtime is $O(k^2n^22^{2k})$. This is a system of linear equations involving $O(n2^{k})$ many variables which can be solved in $O(n^32^{3k})$ many steps by Gaussian elimination~\footnote{More efficient methods of solving linear equations would give a better algorithm, but we cannot avoid the exponential factor in $k$ with this method.}.
\end{proof}

Observe that the previous result implies a polynomial time algorithm if $k=O(\log n)$. That is, as long as the initial number of agnostic vertices is at most logarithmic in the number of vertices, then it is possible to compute the exact consensus probability in polynomial time.

\section{Experimental Analysis of Algorithm~\ref{alg:MCMC}}
\label{sec:experiments}

Our key approach is to estimate probabilities by using MCMC, performing the simulations only until the point where all nodes are gnostic and using Proposition~\ref{prop:nicola's-result} to get probability values that can later be averaged over many runs to obtain an unbiased estimator for the consensus probability of a certain colour. Results from the rumour spreading theory ensure that each single run is fast. The previous section argued theoretically that not many runs are necessary for Algorithm~\ref{alg:MCMC} to attain good multiplicative errors. In this section, we analyze this problem in practice. Namely, we make a few experiments for certain graph families, in order to have an idea of how many runs are actually necessary in practice. 

Before we describe the experiments in detail, note that an upper bound on the number of experiments required can be trivially obtained by considering the case where we do not stop the algorithm when all nodes become gnostic but instead go all the way until consensus is reached. This will give us a series of zeros and ones (corresponding to the target colour being the consensus one or not), which can then be averaged to get a probability. It is easy to see that each run is distributed as a Bernoulli and therefore their sum is distributed as a Binomial with the number of iterations $it$ and the target consensus probability $p$ as parameters. The average will then have variance equal to $\frac{p(1-p)}{it}$ which implies that the standard error (which is a proxy of the true error in the estimate) will scale as the inverse of the square root of the number of iterations. It is also easy to see that this approach of running the algorithm until consensus is reached gives worse estimates than the one where we stop the algorithm when agnostic nodes disappear.

Our first experiment will show that using our approach is substantially better than the error estimate from the previous paragraph. We run our algorithm for cliques and cycles with $1001$ nodes for a varying number of runs and plot the standard error $\sigma_{\bar{x}}$ of the estimate for it.\footnote{Note that having an odd number of nodes on cycles guarantees the process always converges} We contrast those results with the standard error of the algorithm, where each run goes all the way until consensus is reached (described in the previous paragraph). Observe that Figure~\ref{fig:exp1} shows that the error rates are substantially lower for the case where we use our algorithm versus the case where one runs until consensus is reached~\footnote{Note that, additionally, each individual run takes substantially longer to complete if we wait until consensus is reached}. In fact, even as little as $40$ runs are enough to be below $0.01$ error in both cases. 

Figure~\ref{fig:exp1} also contains an additional line showing the standard error of our algorithm using a connected subgraph with $1001$ nodes and $1925$ edges of the graph representing a social network in Slovakia (Pokec from~\cite{takac2012data}). This subgraph was generated by selecting an initial random vertex $v$ and performing a random walk (depth first search) until $1001$ vertices were selected. Naturally, the line corresponding to the algorithm where we wait for consensus is very close to the result for cliques and cycles as should be expected (since it approximates the standard deviation of a Binomial distribution with number of iterations and target probability $p$ as parameters divided by the number of iterations). Moreover, note that our algorithm is again substantially better than waiting until consensus is reached, just like for cliques and cycles, suggesting that our approach is also very effective for a typical social network. 

\begin{figure}
  \includegraphics{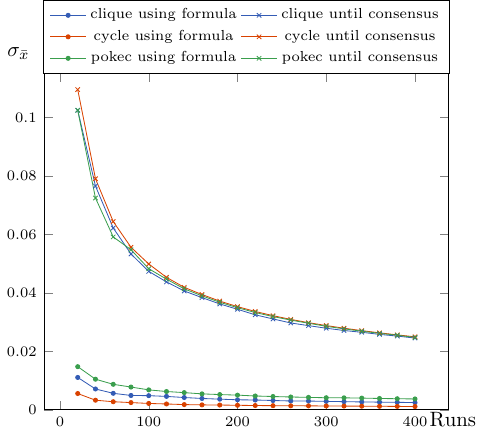}
  \caption{A comparison of the cumulative standard error of the probability of red consensus after all nodes are gnostic and the actual consensus until the simulation finishes. Each simulation ran 400 times on cliques, cycles and a connected subgraph of the Pokec social network with 1001 nodes (5\% red, 5\% blue, 90\% agnostic). For cycles, the initial configuration has all red nodes side by side follow by all blue nodes side by side. For the Pokec subgraph, red and blue nodes were assigned at random, with the rest being agnostic.}
  \label{fig:exp1}
\end{figure}

As an additional experiment, we wonder whether there is a positive effect of increasing the graph sizes on $\sigma_{\bar{x}}$. It turns out that the answer is yes as Figure~\ref{fig:exp2} shows. There, we run our algorithm for graph sizes varying from $300$ to $3000$, with a fixed number of iterations ($400$).\footnote{Using less than $400$ iterations yields a graph with more variance as expected, but the same type of decay.} We perform the experiment on cliques and cycles again and vary the proportion of gnostic nodes as well.

\begin{figure}
  \includegraphics{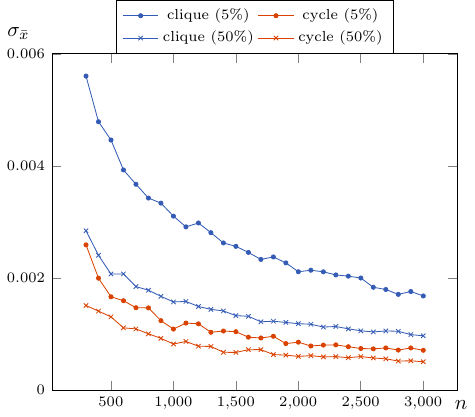}
  \caption{A comparison of the standard error of the probability of red consensus running the simulation 400 times for cliques and cycles of different sizes starting with different proportions of gnostic nodes (5\%, 50\%). Again, the initial configuration for cycles has all red nodes side by side follow by all blue nodes side by side.}
  \label{fig:exp2}
\end{figure}

Lastly, we add that we also performed both experiments on Erd\H{o}s-R\'enyi random graphs with the same values of $n$ and $p=0.05$ and found that the $\sigma_{\bar{x}}$ line for the random graphs essentially stays superposed with the complete graph so we did not find it necessary to include it in the figure to avoid pollution.

For the code repository for these experiments, see \cite{gauy2025vmmrs} or access \url{https://github.com/tmadeira/vmmrs}.

\section{Discussion}
\label{sec:discussion}

For simplicity, in this work, we focused on the pull model as that allows us to analyse both synchronous and asynchronous protocols. However, it is natural to wonder what happens when other strategies are used to transmit information. While synchronous push does not make sense for the voter model, asynchronous push and asynchronous push-and-pull could be used and Proposition~\ref{prop:nicola's-result} shows that the martingale for the voter model holds for asynchronous push. It is likely that both Theorem~\ref{thm:probability} and~\ref{thm:martingale} can be generalized for those strategies as well, requiring only a little extra work, much like the distinction between the asynchronous and synchronous pull protocols.

Much more importantly though, the Markov chain Monte Carlo method of estimating the probabilities of consensus would function in exactly the same way. We could make use of the many known results from the rumour spreading literature to give us guarantees of fast runtime of a single run of Algorithm~\ref{alg:MCMC} on many different types of graphs for those protocols, such as: bounds for the case of general graphs~\cite{feige1990randomized, acan2015push}, random graphs~\cite{fountoulakis2010reliable, panagiotou2017asynchronous, acan2015push}, preferential attachment graphs~\cite{doerr2012asynchronous}, graphs with good conductance~\cite{chierichetti2010rumour} and social networks~\cite{chierichetti2011rumor}. In general, our work implies that one can efficiently estimate probabilities of consensus for a given colour even in the case where agnostic nodes are present as long as a protocol that allows for fast rumour spreading is used.

While we were unable to determine the complexity class of computing the consensus probability for voter models with agnostic nodes, it is important to note that the following problem in the rumour spreading literature is known to be $\#$P-hard: computing the expected number of active nodes given an initial seed set in the independent cascade model~\cite{chen2010scalable}. To the best of our knowledge, a similar result (computing expected number of active nodes after a given number of rounds) for push, pull or push-pull protocols does not exist. Observe that if either of these is found to be $\#$P-hard, then the problem of computing the expected number of red nodes after a given number of rounds in the voter model with agnostic nodes using that protocol would also be $\#$P-hard, as this problem contains the rumour spreading version (just assume the initial state consists only of red and agnostic vertices). The same holds for the version where we assume gnostic vertices do not change colour present in~\cite{zehmakan2024viral}. However, note that the complexity class of the expected number of red vertices after a given number of rounds, while related, does not allow us to determine the complexity class of computing the consensus probability (as the consensus probability is actually the limit of that expectation when the number of rounds grows to infinity).

\section{Conclusion and Future Work}

Here, we introduce a variant of the voter model in which nodes can be agnostic, i.e., have no opinion or colour. Once gnostic, nodes cannot return to being agnostic. This can therefore been seen as a merge between two well-studied processes: the classical voter model and the rumour spreading process.  Our approach allows for efficient estimation of the consensus probabilities for many different information transmission protocols (such as, synchronous pull, asynchronous pull, asynchronous push and asynchronous push-and-pull). We also provide a martingale akin to the one from the classical voter model, and use it to compute exact probabilities of consensus for complete graphs, and initial configurations in general graphs but where there are no edges between agnostic nodes. 

In future work, we consider attempting exact computation of the consensus probabilities for other graph families, like $d$-regular graphs. In general, determining the complexity class of exact computation of consensus probabilities is an interesting and challenging open problem. 
Lastly, it may be of interest to study continuous consensus protocols~\cite{mizrahi2008continuous} in the presence of initially agnostic processes.

\section{Related Work}
\label{sec:related_work}

The notion of reversibility in the context of the voter model has previously been used by Hassin \& Peleg~\cite{hassin2001distributed} to provide winning probabilities on a class of dynamic networks called `stabilising dynamic graphs'. 
In these networks, until a given round, edges may disconnect and nodes attempting to copy a disconnected neighbour keep their own colour, similarly to gnostic nodes choosing agnostic ones in our protocol. 
The authors showed that their previous results hold for networks with reversible Markov chains, i.e. the total influence of the nodes of a given colour remains a martingale in the new process.

Several works studied agents that can be `undecided' as an intermediate state, with nodes transitioning to this state when they select a differently coloured neighbour (\cite{angluin,perron, clementi_et_al:LIPIcs.MFCS.2018.28, petra1}). For a complete graph with binary opinions, the synchronous variant of this protocol has been shown to converge to the most common (plurality) colour in $O(\log n)$ rounds with high probability, assuming there is an initial difference of $\Omega(\sqrt{n \log n})$ in the numbers of agents with each colour~\cite{clementi2018tight}. Similar results have been obtained for the asynchronous protocol in the context of chemical reaction networks~\cite{condon2020approximate}. Additionally, for the consensus problem with $k>2$ opinions, Becchetti et al. defined a `monochromatic distance' function which measures the distance between any colour configuration and consensus, and used this to bound the convergence time of the synchronous process by $O(k \log n)$~\cite{becchetti2015plurality}. Two other famous dynamics are the \textsc{2-Choices} and \textsc{3-Majority}.
The
 \textsc{2-Choices} dynamics, whre an agent samples two agents and only adopts their opinion if they coincide, famously studied in \cite{poweroftwochoices2014} and later on expander graphs \cite{cooperFastConsensusVoting2015,cooperFastPluralityConsensus2017}.
Another closely related dynamics is the
\textsc{3-Majority} dynamics \cite{stabilizingconsensus2016,becchettiSimpleDynamicsPlurality2017} dynamics. Here three agents are sampled (sometimes including the agent itself) the majority opinion is adopted. Ties are broken arbitrarily.
A general analysis of the convergence time of \textsc{2-Choices} and \textsc{3-Majority} was first provided in \cite{ghaffariNearlyTightAnalysis2Choice2018} and refined in \cite{2choices,shimizu3Majority2ChoicesMany2025}.

On the other hand, Demers et al. proposed rumour-spreading protocols to aid the maintenance of distributed databases; these include push, pull, and push-pull transmissions~\cite{demers1987epidemic}. For the synchronous push model, it is known that the number of rounds required to broadcast the rumour to all nodes is at most $O(n \log n)$, which is tight for the star graph~\cite{feige1990randomized}. This process has also been analysed for several other topologies, including complete graphs, hypercubes, bounded-degree graphs, and random graphs~\cite{feige1990randomized}. It has also been studied in the case of random walks spreading the rumour (e.g., \cite{rumourwalk,dimitriou2006infection}).

Our model also has similarities to the biased voter model proposed in \cite{george}, where one colour (corresponding to the agnostic state) has a bias of $0$. However, their results do not apply in our setting since they assume that one colour has a strictly higher preference than all other colours. See also \cite{Lanchier_Neuhauser_2007} for the biased voter model with $2$ opinions in the continuous-time model. Our work is also related to~\cite{zehmakan2024viral}. In it, the authors study the evolution of a process with agnostic nodes, where the key difference is that gnostic nodes can never change colour. Additionally, the work by \cite{manohara2025viral} mentioned previously studies agnostic nodes in the same voter model setting we use. Both of these works focus on the adoption maximization problem and demonstrate that an approximation factor around $1-1/e$ may be achieved, but doing better requires $NP\subseteq DTIME(n^{O(log log n)})$.

Previous works on opinion diffusion have also studied related concepts to agnostic nodes, such as stubbornness. Those are mostly in the context of the majority model~\cite{auletta2017information,out2021majority} and the related Friedkin-Johnson model~\cite{xu2022effects, shirzadi2024stubborn}. The main difference between these models and ours is that the process dynamics is deterministic in the majority and Friedkin-Johnson models, whereas in the voter model, the process dynamics are randomised. 
For the deterministic model, where each agent sets its opinion to the majroity among all of its neighbors see \cite{kaaser2015votingtimedeterministicmajority}.
Lastly, we point the reader to a nice survey on opinion dynamics covering many of the models studied in the literature~\citep{shirzadi2025opinion}.

\subsection*{Code Repository}
For the code repository for these experiments, see \cite{gauy2025vmmrs} or access \url{https://github.com/tmadeira/vmmrs}. 

\subsection*{Acknowledgements}
This work was supported by FAPESP grant number 2022/16374-6 (MMG) and EPSRC grant EP/W005573/1 (FMT).

\newpage

\appendix

\section{Additional Experiments}
\begin{figure}[b]
  \includegraphics{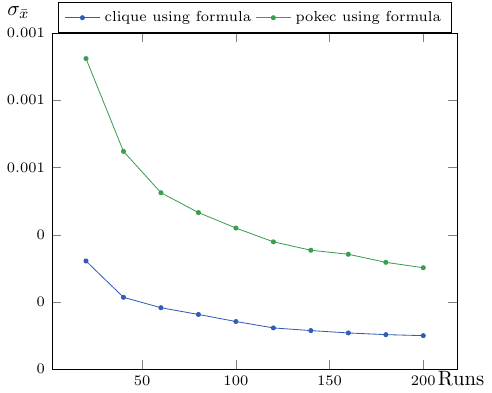}
  \caption{A comparison of the cumulative standard error of the probability of red consensus after all nodes are gnostic. Each simulation ran 200 times on cliques and the full Pokec social network, both with 1,632,803 nodes (5\% red, 5\% blue, 90\% agnostic). For the Pokec graph, red and blue nodes were assigned at random, with the rest being agnostic.}
  \label{fig:exp3}
\end{figure}

As the experiments in Section \ref{sec:experiments} were only done for smaller graphs, here we include additional experiments to show our algorithm at work on larger, real-world graphs. 

The first  additional experiment (Figure~\ref{fig:exp3}) shows the cumulative standard error of the probability of red consensus at the point in which all nodes become gnostic. The graphs compared are the full Pokec social network and a clique of the same size. We perform $200$ repetitions of Algorithm~\ref{alg:MCMC} and plot the standard error curve over the number of runs. Observe that, for large graph sizes the initial error is much smaller than for smaller graph sizes. This is already suggested by Figure~\ref{fig:exp2} but is further confirmed by the experiment on large graphs.

\begin{figure}[t]
    \includegraphics{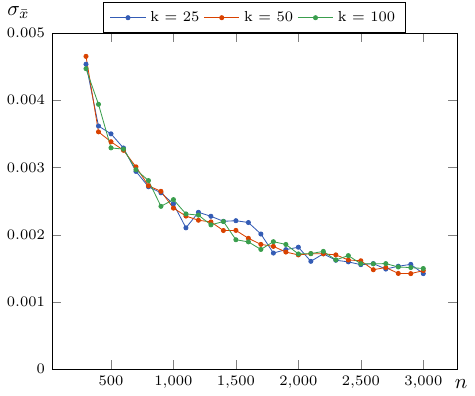}
    \caption{A comparison of the standard error of the probability of red consensus running the simulation 400 times for stochastic block models of different sizes with 10\% of the nodes gnostic (half red, half blue).}
    \label{fig:exp4}
\end{figure}


We also do a second additional experiment (shown in Figure~\ref{fig:exp4}) where the graphs considered are stochastic block models with different community sizes. These are generalizations of Erd\H{o}s-Rényi random graphs where the graph is split into communities and there are different connection probabilities inside each community and between communities. For simplicity, we set the connection probability inside a community to be $1$ and the probability between communities to be $0.05$ and vary the community size $k$. Observe that the standard error $\sigma_{\bar{x}}$ is quite small after $400$ runs, that $\sigma_{\bar{x}}$ decreases as the graph size increases and the community size does not make a significant difference on $\sigma_{\bar{x}}$.

\section{Proof of Proposition~\ref{prop:rumour_random}}

The literature on the rumour spreading process typically dealt with the push model. As we are dealing with a pull model we cannot directly refer to those results. However, the original ideas still work and the proof below is a mere adaptations of the known approaches for the push model. We deal with the case for Erd\H{o}s-R\'enyi random graphs in the push model. We use the same proof ideas from~\cite{panagiotou2017asynchronous} and~\cite{fountoulakis2010reliable} to show Proposition~\ref{prop:rumour_random} which is the pull model for random graphs.

\begin{proposition}[Rumour spreading bounds for random graphs - pull version]
    Let $p>> \log(n)/n$ and $G\sim G(n,p)$ be an Erd\H{o}s-R\'enyi random graph. Consider a (pull) voter model on $G$ with agnostic states (either synchronous or asynchronous). Let $T_a$ denote the number of rounds it takes for the agnostic vertices to disappear. Then, with high probability we have that $\mathbb{E}[T_a] = O(f(n))$ where $f(n) = \log(n)$ in the synchronous case and $f(n)=n\log(n)$ in the asynchronous case.
\end{proposition}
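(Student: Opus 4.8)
The plan is to adapt the standard phase-based analysis of rumour spreading on $G(n,p)$ from the push model (as in the cited works) to the pull model, first establishing the structural properties of the random graph that hold with high probability and then analysing the process on any graph enjoying those properties. Since $p \gg \log(n)/n$, with high probability $G$ is connected, every vertex has degree $(1+o(1))np$ (by a Chernoff bound together with a union bound over the $n$ vertices), and $G$ exhibits the expansion typical of dense random graphs. I would fix such a ``good'' graph $G$ and bound $\mathbb{E}[T_a]$ over the randomness of the process only; the concluding ``with high probability'' then refers to the draw of $G$.

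For the synchronous case I would split the spread of gnostic vertices into two phases. Writing $g$ for the current number of gnostic vertices, in the growth phase (say $g \le n/2$) each agnostic vertex $v$ has roughly a $g/n$ fraction of its $\approx np$ neighbours gnostic, so it becomes gnostic in the next round with probability $\approx g/n$; hence the expected number of newly gnostic vertices is $\approx (n-g)\cdot g/n = \Theta(g)$, and a Chernoff bound applied conditionally on the configuration at the start of the round shows that $g$ grows by a constant factor each round with high probability. This makes the growth phase last $O(\log n)$ rounds. In the completion phase, with $a=n-g$ agnostic vertices remaining, such a vertex stays agnostic only if it picks one of its $\approx a$ agnostic neighbours, an event of probability $\approx a/n$; thus $a$ drops from $a$ to $\approx a^2/n$ per round, a doubly-exponential decay that clears the bulk in $O(\log\log n)$ further rounds, with the final handful of vertices (each turning gnostic with probability close to $1$ per round) cleared by a union bound in $O(\log n)$ rounds. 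Summing, $T_a = O(\log n)$ with high probability.

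For the asynchronous case I would exploit that a single step activates a uniformly chosen vertex with probability $1/n$, so that $\Theta(n)$ consecutive asynchronous steps stochastically dominate one synchronous round up to constants; the same phase analysis then yields $T_a = O(n\log n)$ with high probability. To pass from a high-probability round bound to the claimed expectation bound, I would use a restart argument: the structural guarantees of the good graph $G$ hold regardless of the current configuration, so if the process has not finished within $Cf(n)$ rounds it is, from that point, a fresh attempt with the same constant success probability; hence $T_a$ is stochastically dominated by $Cf(n)$ times a geometric random variable, giving $\mathbb{E}[T_a]=O(f(n))$. Alternatively, one combines this tail bound with the crude always-valid estimate of Proposition~\ref{prop:rumour_general} to bound the negligible contribution of the rare slow runs.

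The main obstacle is the dependency structure in the growth phase: the gnostic set is generated by the process and is not a uniformly random subset, so the events ``$v$ turns gnostic this round'' are correlated and the number of gnostic neighbours of an agnostic vertex depends on the history. The clean way to handle this is to condition on the entire configuration at the start of each round and apply concentration to the conditionally independent one-round choices, while invoking the expansion of the good graph $G$ to guarantee that whatever gnostic set has formed still sends $\Theta(g)$ edges into the agnostic part. Making the $o(1)$ error terms in the degree and neighbourhood estimates uniform over all reachable configurations, so that the per-round multiplicative growth and decay persist throughout, is the technical heart of the argument.
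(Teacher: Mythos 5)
Your synchronous-case skeleton (growth phase plus completion phase, conditioning on the configuration at the start of each round, concentration of the conditionally independent pulls, and uniform expansion estimates for the random graph) is essentially the paper's approach, and your observation that a multiplicative Chernoff bound applies --- because in the pull model the one-round choices of the agnostic vertices are independent given the configuration --- is a legitimate simplification of the paper's use of Talagrand's inequality. But there is a genuine gap at the start of the growth phase: the claim that ``$g$ grows by a constant factor each round with high probability'' is false while $g=O(1)$. With a single gnostic vertex, the probability that no agnostic vertex pulls from it in a round is $\bigl(1-\tfrac{1}{(1+o(1))np}\bigr)^{(1+o(1))np}\approx e^{-1}$, a constant; more generally the conditional Chernoff bound only gives failure probability $\exp(-\Omega(g_t))$ per round, so the union bound over rounds contributes $\Theta(1)$ while $g_t=O(1)$. (Relatedly, your per-vertex heuristic ``each agnostic vertex has a $g/n$ fraction of gnostic neighbours'' is wrong when $gp=o(1)$ --- most agnostic vertices then have no gnostic neighbour; only the aggregate statement via expansion holds.) The paper handles exactly this with a separate bootstrap sub-stage, reaching $\log^{1/2}(n)$ gnostic vertices in $o(\log n)$ rounds using only that each round makes progress with constant probability; you need such a step, or you must downgrade your growth claim to constant success probability and let your restart argument absorb it --- which works for the expectation bound but not for the high-probability bound on $T_a$ you assert. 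A more minor issue: the ``doubly-exponential decay'' $a\mapsto a^2/n$ in the completion phase is not supported once $a$ falls below the error scale of the expansion estimate (which pins $\sum_{v\in A}d_A(v)$ only up to an additive $o(1)\cdot anp$ term), so in the intermediate regime you can only guarantee geometric decay; your overall $O(\log n)$ bound survives, but the $O(\log\log n)$ claim does not.

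The asynchronous case is where your route genuinely diverges from the paper and where the argument as written would fail. The claim that $\Theta(n)$ consecutive asynchronous steps ``stochastically dominate one synchronous round up to constants'' is not established and is not straightforwardly true: in $\Theta(n)$ steps a constant fraction of vertices is never activated, and the natural coupling that guarantees every vertex acts (coupon collector) needs $\Theta(n\log n)$ steps per synchronous round, which would lose a logarithmic factor and give $O(n\log^2 n)$. The paper instead analyses the asynchronous case directly and far more cleanly: letting $t_i$ be the number of steps to go from $i$ to $i+1$ gnostic vertices, $t_i$ is geometric with success probability $\tfrac{1}{n}\sum_{v\notin S}d_S(v)/d(v)=(1\pm o(1))\,i(n-i)/n^2$ by the expansion lemma, so $\mathbb{E}[T_a]=\sum_{i=1}^{n-1}\mathbb{E}[t_i]=(2+o(1))\,n\log n$ by a harmonic sum --- no comparison with the synchronous process is needed and the expectation bound falls out immediately. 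Either adopt that computation, or replace your domination claim by an epoch argument (condition at the start of each block of $n$ steps, lower-bound the per-step conversion probability via expansion, and dominate the block's progress by a binomial), which again requires the bootstrap fix above. On the positive side, your restart argument for converting high-probability bounds into expectation bounds is sound, since the expansion estimates hold uniformly over all configurations, and it in fact patches a step the paper leaves implicit in its synchronous case.
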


\begin{proof}
   \textbf{Asynchronous case:} For the asynchronous case, we will show that $\mathbb{E}(T_a) \geq \left(2+o(1)\right)n\log(n)$. The idea is simple and borrows from~\cite{panagiotou2017asynchronous}. The main difference in the proof is that we do it for the pull model, instead of the push model. Assuming that $i$ vertices are gnostic, let $t_i$ denote the number of rounds it takes for the vertex of number $i+1$ to become gnostic. We can assume, wlog, that we start with a single gnostic vertex. Then $T_a = t_1+\ldots+t_{n-1}$. By linearity of expectation, we have that $\mathbb{E}(T_a) = \sum_{i=1}^{n-1}\mathbb{E}(t_i)$. Thus, we just need to bound $\mathbb{E}(t_i)$ for each $i$. The crucial step is to notice that conditioned on the current gnostic set $S$ of size $i$, we have that $t_i$ is geometrically distributed with probability of a newly gnostic vertex at any given round given by $\frac{1}{n}\sum_{v\in [n]\setminus S} \frac{d_{S}(v)}{d(v)}$, where $d_{S}(v)$ denotes the degree of vertex v in the set $S$ and $d(v)$ is the degree of vertex $v$. We can assume that $d(v) = (1+o(1))np$ for all $v$ as that happens with high probability. Moreover 
   \begin{equation*}
   \sum_{v\in [n]\setminus S} d_{S}(v) = e(S,[n]\setminus S) = \left(1\pm \sqrt{\frac{8}{\alpha(n)}}\right)(n-|S|)|S|p
   \end{equation*}
   by Lemma~\ref{lemma:kosta} with high probability. The expected time of a geometrically distributed random variable with probability $q$ is given by $1/q$. Thus, by combining all of the above, for all graphs $G$ satisfying the conditions of Lemma~\ref{lemma:kosta}, we have by law of total expectation that $\mathbb{E}(t_i) = \sum_{S}\mathbb{P}(S)\mathbb{E}(t_i|S)$ and $\mathbb{E}(t_i|S) = \frac{n^2p}{\left(1\pm \sqrt{\frac{8}{\alpha(n)}}\right)(n-i)ip}$ (where we use that $|S| = i$). Thus $\mathbb{E}(t_i) = (1+o(1))\frac{n^2}{i(n-i)} = (1+o(1))\left(\frac{n}{i}+\frac{n}{n-i}\right)$. By summing over all $i$ we get 
   $$\mathbb{E}(T_a) = \sum_{i=1}^{n-1}(2+o(1))\left(\frac{n}{i}\right) \leq (2+o(1))n\log(n).$$
    
    \textbf{Synchronous case:} We follow similar steps to the proof of the rumour spreading process with a push protocol for random graphs presented in~\cite{fountoulakis2010reliable}. Let $S_t$ denote the gnostic set at round $t$. We split the process in 2 stages: given a small $\varepsilon>0$, the first stage, which takes time $T_1$, is for the range where $|S_t|\leq \varepsilon n$; the second stage, which takes time $T_2$, is for the range $\varepsilon n\leq |S_t|$.

    For the first stage, we split it in two parts: the time $T_1'$ it takes to activate the first $\log^{1/2}(n)$ vertices and the time ($T_1-T_1'$) it takes to activate $\varepsilon n$ vertices after the first $\log^{1/2}(n)$ were activated. To compute $T_1'$, observe that there are at least $(1+o(1))np$ agnostic vertices with at least one gnostic neighbour (as a single gnostic vertex has $(1+o(1))np$ neighbours and as $np>>\log^{1/2}(n)$ so at least $(1+o(1))np$ of those are agnostic). For every round, the probability that no new agnostic vertex becomes gnostic is at most $\left(1-\frac{1}{(1+o(1))np}\right)^{(1+o(1)np)}\leq ce^{-1}$ (where $c$ is a constant very close to $1$ for large enough $n$, for simplicity set $n$ large enough so that $c<2$) as there are at least $(1+o(1))np$ agnostic vertices with at least one gnostic neighbour and each of them has degree $(1+o(1))np$. As a result, the probability that $T_1' > C\log^{1/2}n$, for some constant $C>1$, is at most $(2/e)^{(C-1)\log^{1/2}(n)} = o(1)$ as there are at least $(C-1)\log^{1/2}(n)$ rounds where no new agnostic vertex became gnostic. 
    
    Thus, $T_1' = o(\log(n))$ with high probability. Let us now compute how long it takes to go from $\log^{1/2}(n)$ gnostic vertices to $\varepsilon n$ gnostic vertices for some small $\varepsilon$ constant. Let $v$ be an agnostic vertex at some round $t$. The probability that $v$ becomes gnostic is $\frac{d_{S_t}(v)}{d(v)}$, where $d_{S_t}(v)$ is the size of the neighbourhood of $v$ in the set $S_t$ and $d(v)$ is the degree of vertex $v$. Let $X_t(v)$ be an indicator random variable for the event that $v$ turns gnostic at round $t$. By the discussion above, $\mathbb{E}[X_t(v)] = \frac{d_{S_t}(v)}{d(v)}$. If $X_t = \sum_{v\notin S_t}X_{t}(v)$ denotes the number of newly acquired vertices at round $t$, then we have that $\mathbb{E}[X_t] = \sum_{v\notin S_t}\mathbb{E}[X_{t}(v)]$, which by the above discussion is equal to $\mathbb{E}[X_t] = \sum_{v\notin S_t}\frac{d_{S_t}(v)}{d(v)}$. By using Lemma~\ref{lemma:kosta}, we can assume that, with high probability, every vertex v has degree $d(v) = (1+o(1))np$ and moreover we can assume that 
    \begin{equation}
        \sum_{v\notin S_t}d_{S_t}(v) = e(S_t, [n]\setminus S_t) = \left(1\pm \sqrt{\frac{8}{\alpha(n)}}\right)(n-|S|)|S|p.
    \end{equation}
    As a result, we have that $\mathbb{E}[X_t] = \left(1\pm \sqrt{\frac{8}{\alpha(n)}}\right)\frac{(n-|S_t|)|S_t|}{n}$. For the first stage, where $|S_t|\leq \varepsilon n$ we cannot use Azuma-Hoeffding's inequality~\footnote{The random variable $X_t$ is the number of agnostic vertices that turn gnostic at round $t$. Note that this is a function of the $X_t(v)$ random variables and that these $X_t(v)$ for a fixed $t$ are independent, as vertices $v\notin S_t$ pick a neighbour independently from one another.} 
    (Theorem~\ref{thm:azuma}), as there are at least $(1-\varepsilon)n$ agnostic vertices and the expectation $\mathbb{E}[X_t] < |S_t|$ so Azuma-Hoeffding won't give meaningful bounds. Talagrand's inequality (Theorem~\ref{thm:talagrand}), however, can give us the bounds we need: changing $X_t(v)$, will change $X_t$ by at most one so the bounded differences condition is satisfied (necessary for both Talagrand and Azuma). For the second condition, observe that $X_t=r$ implies that there are $r$ agnostic vertices which pulled their opinion from gnostic vertices. We can take those agnostic vertices as the set $J$ and the condition will be satisfied with $\psi(r) = \lceil{r}\rceil$. Thus, by Talagrand, if $m$ is a median of $X_t$, we have that $\mathbb{P}(|X_t-m|\geq x)\leq 4\exp\left(\frac{x^2}{\lceil{m+x}\rceil}\right)$. As $\mathbb{E}[X_t]\geq m\mathbb{P}(X_t>m)\geq m/2$, we can rewrite the last inequality in term of $\mathbb{E}[X_t]$:

    $$\mathbb{P}(|X_t-m|\geq x)\leq 4\exp\left(\frac{x^2}{\lceil{2\mathbb{E}[X_t]+x}\rceil}\right]$$

    Moreover, note that by triangle inequality, 
    $$|X_t-\mathbb{E}[X_t]| \leq |X_t-m|+|m-\mathbb{E}[X_t]| = |X_t-m|+O(\sqrt{\mathbb{E}[X_t]}).$$ Since we are assuming that $\log^{1/2}(n)\leq|S_t|\leq \varepsilon n$, we have that $|S_t|\geq \mathbb{E}[X_t]\geq (1-2\varepsilon)|S_t|$ for large enough $n$. Thus, we have that $\mathbb{P}(|X_t-\mathbb{E}[X_t]| \geq \varepsilon |S_t|) \leq \mathbb{P}(|X_t-m|\geq \varepsilon|S_t|/2)$ and by Talagrand $\mathbb{P}(|X_t-\mathbb{E}[X_t]| \geq \varepsilon |S_t|) \leq 4\exp\left(-\frac{\varepsilon^2|S_t|^2}{(2+\varepsilon)|S_t|}\right) \leq 4\exp\left(-\frac{\varepsilon|S_t|}{3}\right)$. 
    
    We have shown that $\mathbb{P}(X_t < (1-3\varepsilon)|S_t|)\leq 4\exp\left(-\frac{\varepsilon|S_t|}{3}\right)$. Now, for computing the time $T_1-T_1'$, we can observe that with probability at most $\sum_{t=T_1'}^{T_1}4\exp\left(-\frac{\varepsilon|S_t|}{3}\right)$, we will have at least one of $X_t < (1-3\varepsilon)|S_t|$ for some $T_1'\leq t\leq T_1$. As $|S_{t+1}|\geq 3/2 |S_t|$ if $X_t\geq (1-3\varepsilon)|S_t|$, the probability that $T_1-T_1'\geq \log_{3/2}(\varepsilon n)$ is at most $\sum_{|S_t|=\log^{1/2}(n)}^{|S_t|=\varepsilon n}4\exp\left(-\frac{\varepsilon|S_t|}{3}\right)$ which is upper bounded by the integral of $4\exp\left(-\frac{\varepsilon|S_t|}{3}{}\right)$ with the same intervals as the sum, which is at most $\frac{12}{\varepsilon}e^{-\varepsilon \log^{1/2}(n)/3} = o(1)$. We conclude the first stage having shown that $T_1\leq \log(n)$ with high probability.

    Now, we move on to the last stage. The idea is again to show that the number of new vertices per round is concentrated around its expectation and said expectation leads to exponential growth. The main advantage is that this time we can just use Azuma-Hoeffding's inequality instead of Talagrand. As before, we will again have $\mathbb{E}[X_t] = (1\pm \sqrt{\frac{8}{\alpha(n)}})\frac{(n-|S_t|)|S_t|}{n}$. This time, however, we can just use Azuma-Hoeffding's inequality as $\mathbb{E}[X_t] = O(n-|S_t|)$. The bounded differences condition again holds with $c_k=1$. As a result, we can apply Azuma's inequality with $c_k=1$ to conclude that $X_t$ is sharply concentrated around $\mathbb{E}[X_t]$. That is, we have 
    $$\mathbb{P}(|X_t-\mathbb{E}[X_t]| \geq \varepsilon \mathbb{E}[X_t])\leq 2\exp\left(-\frac{\varepsilon^2\mathbb{E}[X_t]^2}{2(n-|S_t|)}\right).$$ 
    Replacing the value of $\mathbb{E}[X_t]$, we have 
    $$\mathbb{P}(|X_t-\mathbb{E}[X_t]| \geq \varepsilon \mathbb{E}[X_t])\leq 2\exp\left(-\frac{\varepsilon^2(n-|S_t|)^2|S_t|^2}{4n^2(n-|S_t|)}\right),$$
    where we use that $1\pm \sqrt{\frac{8}{\alpha(n)}}\geq 1/2$. We then get 
    $$\mathbb{P}(|X_t-\mathbb{E}[X_t]| \geq \varepsilon \mathbb{E}[X_t])\leq 2\exp\left(-\frac{\varepsilon^4(n-|S_t|)}{4}\right)$$ 
    where we use that $|S_t|\geq \varepsilon n$.
    Thus,
    $$\mathbb{P}(X_t < (1-\varepsilon)\mathbb{E}[X_t]) \leq 2\exp\left(-\frac{\varepsilon^4(n-|S_t|)}{4}\right).$$
    Observe that $(1-\varepsilon)\mathbb{E}[X_t] \geq (1-2\varepsilon)|S_t|(1-|S_t|/n)$ where we use that $1-2\varepsilon \geq (1-\varepsilon)\left(1\pm \sqrt{\frac{8}{\alpha(n)}}\right)$ for sufficiently large $n$. We have $|S_{t+1}| = |S_t|+X_t$. If $X_t>(1-\varepsilon)\mathbb{E}[X_t]$, we have 
    $$|S_{t+1}|\geq |S_t|+(1-2\varepsilon)|S_t|(1-|S_t|/n) \geq |S_t|(1+(1-2\varepsilon)(1-|S_t|/n)).$$
    We can then write $n-|S_{t+1}|\leq (n-|S_t|)-(1-2\varepsilon)|S_t|(1-|S_t|/n)$, which can be rewritten as $n-|S_{t+1}|\leq (n-|S_t|)(1-(1-2\varepsilon)|S_t|/n)$ and the term multiplying $(n-|S_t|)$ is seen to be at most $(1-\varepsilon(1-2\varepsilon))\leq 1/(1+\varepsilon/2)$. We can then let $T_2'$ be the time it takes to activate $n-\log(n)$ vertices after we have already activated $\varepsilon n$ vertices. From the previous discussion, we either have $T_2' < \log_{1+\varepsilon/2}(n)$ which is $O(\log(n))$ or there is at least one $t$ for which $X_t<(1-\varepsilon)\mathbb{E}[X_t]$, which happens with probability at most $\sum_{|S_t|=\varepsilon n}^{|S_t|=n-\log(n)}2\exp\left(-\frac{\varepsilon^4(n-|S_t|)}{4}\right)$. This is upper bounded by the integral of $2\exp\left(-\frac{\varepsilon^4(n-|S_t|)}{4}\right)$ with the same intervals as the sum, which is at most $\frac{8}{\varepsilon^4}e^{-\varepsilon^4 \log(n)} = o(1)$. Thus, $T_2' = O(\log(n))$ with high probability.
    Now, we have to bound the time it takes to activate the remaining $\log(n)$ vertices. Note that the still agnostic vertices have degree $(1+o(1))np >> \log(n)$, due to the assumption on $p$. This means that every round, each of the still agnostic vertices has probability close to $1$ (easily larger than $1/2$) of turning gnostic (as it has at most $\log(n)$ edges to agnostic vertices). Thus, after $C\log(n)$ rounds, for some $C>0$ the probability that one vertex is still agnostic is at most $\log(n)2^{-C\log(n)}<n^{-C+1} = o(1)$.

    Combining all of the above steps we have that $T_a$ is $O(\log(n))$ with high probability as both stages happen in $O(\log(n))$ with high probability.
    
\end{proof}

\begin{lemma}[Lemma $1$ from~\cite{panagiotou2017asynchronous}]
    Let $p=\alpha(n)\log(n)/n$, where $\alpha(n) = \omega(1)$. Then, with high probability, $G_{n,p}$ is such that for all $S\subseteq [n]$:
    $$e(S,[n]\setminus S) = \left(1\pm \sqrt{\frac{8}{\alpha(n)}}\right)(n-|S|)|S|p$$
    \label{lemma:kosta}
\end{lemma}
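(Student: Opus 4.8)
The plan is to establish the claim for one fixed cut via a Chernoff bound and then union bound over all subsets, exploiting the fact that the slack $\sqrt{8/\alpha(n)}$ is calibrated so that the concentration beats the entropy of the subset family. This is a standard edge-count concentration result for $G(n,p)$, and the only thing to watch is the interaction between the tail bound and the $2^n$ cuts.

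First I would fix $S\subseteq[n]$ with $|S|=s$ and note that $e(S,[n]\setminus S)$ is a sum of $s(n-s)$ independent indicators, one per potential crossing edge, each present with probability $p$. Hence $e(S,[n]\setminus S)\sim\mathrm{Bin}(s(n-s),p)$ with mean $\mu_S=s(n-s)p$, which is exactly the target quantity stripped of the $(1\pm\sqrt{8/\alpha(n)})$ factor. Writing $\delta=\sqrt{8/\alpha(n)}$, and observing that $\delta\to 0$ since $\alpha(n)=\omega(1)$ so that $\delta<1$ for large $n$ and the multiplicative Chernoff bound applies, I would invoke
$$\mathbb{P}\big(|e(S,[n]\setminus S)-\mu_S|\geq \delta\mu_S\big)\leq 2\exp\left(-\frac{\delta^2\mu_S}{3}\right).$$
Substituting $p=\alpha(n)\log(n)/n$ collapses the exponent, since $\delta^2\mu_S=\frac{8}{\alpha(n)}\cdot s(n-s)\cdot\frac{\alpha(n)\log n}{n}=\frac{8s(n-s)\log n}{n}$, so the failure probability for this fixed $S$ is at most $2\exp\!\big(-\tfrac{8s(n-s)\log n}{3n}\big)$.

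Next I would take the union bound. Because $e(S,[n]\setminus S)=e([n]\setminus S,S)$, it suffices to treat $s\leq n/2$, where $n-s\geq n/2$ forces the exponent to be at least $\tfrac{4s\log n}{3}$. There are $\binom{n}{s}\leq (en/s)^s\leq e^{s(1+\log n)}$ sets of size $s$, so the total failure probability over cuts of size $s$ is at most $2\exp\!\big(s(1+\log n)-\tfrac{4s\log n}{3}\big)=2\exp\!\big(s-\tfrac{s\log n}{3}\big)\leq 2n^{-s/6}$ for $n$ large. Summing this geometric series over $1\leq s\leq n/2$ gives $O(n^{-1/6})=o(1)$, and I would conclude that with high probability the two-sided estimate holds simultaneously for every $S$.

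The hard part is precisely this balancing act: the collection of cuts is exponentially large in $n$, so an off-the-shelf concentration estimate is worthless, and everything hinges on the constant $8$. The work is in verifying that the Chernoff exponent (roughly $\tfrac{4s\log n}{3}$ once $s\leq n/2$) strictly dominates the entropy term $s\log n$ coming from $\binom{n}{s}$, and that this domination is uniform over the entire range of cut sizes from constant up to $n/2$; the choice $\delta=\sqrt{8/\alpha(n)}$ is exactly what makes the ratio of these two exponents a constant strictly above $1$.
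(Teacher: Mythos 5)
Your proof is correct, but there is no proof in the paper to compare it with: the paper does not prove this lemma at all. As the bracketed attribution in its title indicates, it is imported verbatim as Lemma~1 of Panagiotou and Speidel's work on asynchronous rumour spreading, and it is used purely as a black box inside the appendix proofs of Propositions~\ref{prop:rumour_general} and~\ref{prop:rumour_random}, where it controls the cut sizes $e(S_t,[n]\setminus S_t)$ and the vertex degrees. Your self-contained argument is the standard route for such edge-concentration statements, and it checks out: each cut count is $\mathrm{Bin}(s(n-s),p)$ with mean $\mu_S=s(n-s)p$; the two-sided multiplicative Chernoff bound is applicable because $\delta=\sqrt{8/\alpha(n)}<1$ for large $n$ (using $\alpha(n)=\omega(1)$); the exponent $\delta^2\mu_S/3=\frac{8s(n-s)\log n}{3n}$ is at least $\frac{4s\log n}{3}$ after the legitimate reduction to $s\le n/2$ (both the edge count and the target bound are symmetric under complementation); and this uniformly dominates the entropy term $\log\binom{n}{s}\le s(1+\log n)$, yielding a per-size failure probability of $2n^{-s/6}$ whose geometric sum over $1\le s\le n/2$ is $o(1)$, while $S=\emptyset$ and $S=[n]$ hold trivially with both sides zero. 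So what you have done is fill in a dependency the authors deliberately outsourced; as a proof of the stated lemma it is complete and correct, and the balancing observation you emphasize (the constant $8$ making the Chernoff exponent beat the $s\log n$ entropy by a constant factor) is exactly the point of the lemma's formulation.
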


\begin{theorem}[Azuma-Hoeffding's inequality]
\label{thm:azuma}
Let $Z_1,\ldots, Z_n$ be independent random variables taking values in the sets $\Lambda_1,\ldots, \Lambda_n$. Let $\Lambda=\Lambda_1\times\ldots\times\Lambda_n$. Let $f:\Lambda\to \mathbb{R}$ be a function and set $X = f(Z_1,\ldots, Z_n)$. Assume there are quantities $c_1,\ldots, c_n$ satisfying the following condition: if $z,z'\in \Lambda$ differ only in the $k$-th coordinate, then $|f(z)-f(z')| \leq c_k$. Then, for every $x\geq 0$ we have that 
$$\mathbb{P}(|X-\mathbb{E}[X]|\geq x) \leq 2\exp\left(-\frac{x^2}{2\sum_{k=1}^n c_k^2}\right).$$
\end{theorem}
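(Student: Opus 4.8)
The plan is to establish this via the Doob (exposure) martingale together with the Chernoff-bound method and Hoeffding's lemma, which is the standard route to a bounded-differences inequality. First I would construct the Doob martingale associated with $X = f(Z_1,\ldots,Z_n)$ by setting
$$Y_k = \mathbb{E}[X \mid Z_1,\ldots,Z_k], \qquad k = 0,1,\ldots,n,$$
so that $Y_0 = \mathbb{E}[X]$ (conditioning on nothing) and $Y_n = X$ (since $X$ is a deterministic function of all the $Z_i$). By the tower property, $(Y_k)$ is a martingale with respect to the filtration $\mathcal{F}_k = \sigma(Z_1,\ldots,Z_k)$, and the quantity of interest telescopes as $X - \mathbb{E}[X] = \sum_{k=1}^n D_k$, where $D_k = Y_k - Y_{k-1}$ satisfies $\mathbb{E}[D_k \mid \mathcal{F}_{k-1}] = 0$.

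Next I would show that each martingale difference is bounded conditionally on $\mathcal{F}_{k-1}$, namely $|D_k| \leq c_k$ almost surely. This is the step where independence and the bounded-differences hypothesis enter. Fixing values $z_1,\ldots,z_{k-1}$ and writing $g(z) = \mathbb{E}[f(z_1,\ldots,z_{k-1},z,Z_{k+1},\ldots,Z_n)]$, independence of the $Z_i$ gives $Y_k = g(Z_k)$ and $Y_{k-1} = \mathbb{E}[g(Z_k)]$. The hypothesis $|f(z)-f(z')| \leq c_k$ for inputs differing only in coordinate $k$ transfers, upon integrating out the later coordinates, to $|g(z)-g(z')| \leq c_k$; hence $g(Z_k)$ ranges over an interval of length at most $c_k$ and $D_k = g(Z_k) - \mathbb{E}[g(Z_k)]$ lies in $[-c_k, c_k]$.

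Then I would apply Hoeffding's lemma conditionally: since $D_k$ has conditional mean zero and lies in an interval of length at most $2c_k$, we get $\mathbb{E}[e^{\lambda D_k} \mid \mathcal{F}_{k-1}] \leq \exp(\lambda^2 c_k^2/2)$. Peeling off the conditional expectations one at a time via the tower property yields
$$\mathbb{E}\left[e^{\lambda(X-\mathbb{E}[X])}\right] \leq \exp\left(\frac{\lambda^2}{2}\sum_{k=1}^n c_k^2\right).$$
Finally, Markov's inequality applied to $e^{\lambda(X-\mathbb{E}[X])}$ gives, for $\lambda > 0$, the bound $\exp(-\lambda x + \frac{\lambda^2}{2}\sum_k c_k^2)$ on $\mathbb{P}(X - \mathbb{E}[X] \geq x)$; optimizing at $\lambda = x/\sum_k c_k^2$ produces $\exp(-x^2/(2\sum_k c_k^2))$. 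Running the same argument for $-X$ and adding the two one-sided bounds yields the stated two-sided inequality with the prefactor $2$.

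The main obstacle is the conditional bound $|D_k| \leq c_k$ in the second step: one must verify carefully that the bounded-differences hypothesis, stated pointwise for $f$, is inherited by the partially averaged function $g$, and that independence is genuinely used so that conditioning on $Z_1,\ldots,Z_{k-1}$ renders $Y_k$ a function of $Z_k$ alone. Once this is in place, Hoeffding's lemma and the Chernoff optimization are entirely routine.
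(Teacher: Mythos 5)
The paper itself gives no proof of Theorem~\ref{thm:azuma}: it is stated in the appendix as a classical black-box tool (Azuma--Hoeffding / McDiarmid's bounded-differences inequality) and is only invoked inside the proof of Proposition~\ref{prop:rumour_random}, so there is no internal argument to compare yours against. Your proposal is the standard and correct derivation: the Doob exposure martingale $Y_k=\mathbb{E}[X\mid Z_1,\dots,Z_k]$, the verification --- using independence --- that conditionally on $Z_1,\dots,Z_{k-1}$ the increment $D_k$ equals $g(Z_k)-\mathbb{E}[g(Z_k)]$ for a partially averaged function $g$ that inherits the bounded-differences property of $f$, Hoeffding's lemma applied conditionally, tower-property peeling of the moment generating function, Chernoff optimization at $\lambda=x/\sum_k c_k^2$, and a union bound over the two tails. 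All steps check out, including the one you correctly flag as the delicate point. One remark: your own second step shows that $g(Z_k)$ ranges over an interval of length at most $c_k$, so $D_k$ lies conditionally in an interval of length $c_k$ containing $0$; applying Hoeffding's lemma to that interval (rather than to the symmetric enclosure $[-c_k,c_k]$) gives $\mathbb{E}[e^{\lambda D_k}\mid \mathcal{F}_{k-1}]\leq \exp(\lambda^2c_k^2/8)$ and hence the sharper tail $2\exp\left(-2x^2/\sum_k c_k^2\right)$, which is McDiarmid's form; the cruder enclosure you use loses a factor of $4$ in the exponent but reproduces exactly the constant stated in the theorem, so the proof as written establishes precisely the claimed bound and is fully adequate for the paper's purposes.
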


\begin{theorem}[Talagrand's Inequality]
\label{thm:talagrand}
Suppose we are in the conditions of Azuma-Hoeffding's inequality. Additionally, assume there is an increasing function $\psi$ satisfying the following: if $z\in\Lambda$ and $r\in \mathbb{R}$ is such that $f(z)\geq r$, then there exists a set $J\subseteq\{1,\ldots, n\}$ with $\sum_{j\in J} c_j^2\leq \psi(r)$, such that for all $y\in \Lambda$ with $y_i=z_i$ for $i\in J$, we have $f(y)\geq r$. 
Then, if $m$ is a median of $X$, we have that, for every $x\geq 0$:
$$\mathbb{P}(|X-m|\geq x) \leq 4 \exp\left(\frac{x^2}{\psi(m+x)}\right).$$
Moreover, when $\psi(r)\leq \lceil{r}\rceil$, we have that $|m-\mathbb{E}[X]|\leq O(\sqrt{\mathbb{E}[X]})$.
\end{theorem}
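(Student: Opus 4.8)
The plan is to recognise this statement as the ``certifiable function'' corollary of Talagrand's convex-distance isoperimetric inequality, rather than as a consequence of the bounded-differences bound of Theorem~\ref{thm:azuma}: the certification hypothesis is exactly what lets the deviation be measured against $\psi$ instead of against $\sum_k c_k^2$, which Azuma--Hoeffding alone cannot see. First I would set up the convex distance on the product space $\Lambda = \Lambda_1 \times \cdots \times \Lambda_n$: for $z \in \Lambda$ and $A \subseteq \Lambda$,
\[
d_T(z, A) = \sup_{\|\alpha\|_2 \le 1,\ \alpha \ge 0}\ \min_{y \in A}\ \sum_{i :\, y_i \ne z_i} \alpha_i,
\]
and invoke Talagrand's inequality $\mathbb{P}(A)\,\mathbb{P}\big(d_T(\cdot, A) \ge t\big) \le e^{-t^2/4}$ as the single external ingredient. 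Establishing this isoperimetric inequality from first principles (by induction on $n$ via a one-dimensional tensorisation step) is the genuine obstacle, so since the paper only needs the corollary I would cite it and build on top.

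The heart of the argument is a reduction turning the certification hypothesis into a lower bound on the convex distance. Fix levels $a \le b$ and set $A = \{z : f(z) \le a\}$. I would show that every $z$ with $f(z) \ge b$ satisfies $d_T(z, A) \ge (b - a)/\sqrt{\psi(b)}$. Since $f(z) \ge b$, the certification hypothesis supplies a set $J$ with $\sum_{j \in J} c_j^2 \le \psi(b)$ such that any point agreeing with $z$ on $J$ has $f$-value at least $b$. Given $y \in A$, let $y'$ agree with $z$ on $J$ and with $y$ off $J$; then $f(y') \ge b$, while the bounded-differences hypothesis gives $f(y') - f(y) \le \sum_{i \in J,\, y_i \ne z_i} c_i$, so $\sum_{i \in J,\, y_i \ne z_i} c_i \ge b - a$. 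Taking the admissible test vector $\alpha$ with $\alpha_i = c_i/\sqrt{\psi(b)}$ on $J$ and $0$ elsewhere (so $\|\alpha\|_2 \le 1$) yields $\sum_{i :\, y_i \ne z_i} \alpha_i \ge (b-a)/\sqrt{\psi(b)}$ for every $y \in A$, and minimising over $y$ proves the claim.

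Feeding $t = (b-a)/\sqrt{\psi(b)}$ into the convex-distance inequality and using $\{f \ge b\} \subseteq \{d_T(\cdot, A) \ge t\}$ gives the key estimate $\mathbb{P}(f \le a)\,\mathbb{P}(f \ge b) \le e^{-(b-a)^2/(4\psi(b))}$ for all $a \le b$. Both tails around a median $m$ drop out of this single inequality: choosing $a = m$ and $b = m + x$ (so $\mathbb{P}(f \le m) \ge \frac{1}{2}$) gives the upper tail $\mathbb{P}(X \ge m + x) \le 2 e^{-x^2/(4\psi(m+x))}$, while choosing $b = m$ and $a = m - x$ (so $\mathbb{P}(f \ge m) \ge \frac{1}{2}$) gives the lower tail $\mathbb{P}(X \le m - x) \le 2 e^{-x^2/(4\psi(m))}$. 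Adding the two and bounding $\psi(m) \le \psi(m+x)$ by monotonicity produces the stated two-sided bound $\mathbb{P}(|X - m| \ge x) \le 4 \exp\!\big(-x^2/\psi(m+x)\big)$, up to the absolute constant hidden in the exponent.

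Finally, for the ``moreover'' clause I would estimate $|m - \mathbb{E}[X]| \le \mathbb{E}|X - m| = \int_0^\infty \mathbb{P}(|X - m| \ge x)\,dx$ and substitute the tail bound under the hypothesis $\psi(r) \le \lceil r \rceil$. Splitting the integral at $x = m$, the range $x \le m$ has $\psi(m + x) \le 2m + 1$, so its contribution is at most $\int_0^\infty 4 e^{-x^2/(4(2m+1))}\,dx = O(\sqrt{m})$, while over $x > m$ the integrand decays exponentially in $x$ and contributes $O(1)$. Hence $|m - \mathbb{E}[X]| = O(\sqrt{m})$, and since this shows $m$ and $\mathbb{E}[X]$ agree up to lower-order terms, $|m - \mathbb{E}[X]| = O(\sqrt{\mathbb{E}[X]})$ follows. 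The one genuinely hard ingredient is Talagrand's convex-distance inequality itself; granting it, the certificate-to-distance reduction of the second paragraph is the only real step, with the rest being bookkeeping.
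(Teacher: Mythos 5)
The paper never proves this theorem: it is stated in the appendix as a known, off-the-shelf concentration result (alongside Azuma--Hoeffding, Theorem~\ref{thm:azuma}, and Lemma~\ref{lemma:kosta}) and is used as a black box in the proof of Proposition~\ref{prop:rumour_random}, so there is no internal proof to compare yours against. Your derivation is the standard route for ``certifiable'' functions and it is correct: for $f(z)\ge b$ and $A=\{f\le a\}$, the certificate $J$ at level $b$ combined with the bounded-differences telescoping bound gives $\sum_{i\in J,\,y_i\ne z_i}c_i\ge b-a$ for every $y\in A$, the unit test vector supported on $J$ with entries $c_i/\sqrt{\psi(b)}$ converts this into $d_T(z,A)\ge (b-a)/\sqrt{\psi(b)}$, the convex-distance isoperimetric inequality then yields $\mathbb{P}(f\le a)\,\mathbb{P}(f\ge b)\le \exp\bigl(-(b-a)^2/(4\psi(b))\bigr)$, and the two median specialisations plus monotonicity of $\psi$ give the two-sided bound. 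Two mismatches with the printed statement, neither a gap in your argument: (i) the paper's exponent is printed with a positive sign, which makes the bound vacuous --- clearly a typo, since the paper's own application of the theorem inside Proposition~\ref{prop:rumour_random} ultimately uses the negative exponent --- and your negative-exponent version is the intended one; the extra factor $1/4$ in your exponent (which you flag) is the honest constant from this derivation, the paper's constant-free form being stated loosely. (ii) In the ``moreover'' clause, your last step deserves one more line: from $|m-\mathbb{E}[X]|\le C(\sqrt{m}+1)$ you should first solve the resulting quadratic inequality to get $m=O(\mathbb{E}[X]+1)$ before replacing $\sqrt{m}$ by $\sqrt{\mathbb{E}[X]}$; as written, $O(\sqrt{\mathbb{E}[X]})$ silently absorbs an additive constant, which is also how the literature's version of this clause should be read.
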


\bibliography{sn-bibliography}

\end{document}